\newtheorem{theorem}{{\bf{Theorem}}}
\begin{document}
	\title{	
		A Nonlinear Proportional Integral Disturbance Observer and Motion Control Technique for Permanent Magnet Synchronous Motors}
	
	\author{
		\vskip 1em
		
		Yong Woo Jeong, \emph{IEEE Student Member},
		and Chung Choo Chung$^\dag$, \emph{IEEE Member}
		
		\thanks{
			
			%
			Yong Woo Jeong is with the Electrical Engineering Department, University of Hanyang, Seoul, 04763, Korea (e-mail: elecjeong@hanyang.ac.kr).
			Chung Choo Chung is with the Div. of Electrical and Biomedical Engineering, University of Hanyang, Seoul, 04763, Korea (e-mail: cchung@hanyang.ac.kr).
		}
		\thanks{$\dag$: Corresponding Author}
	}
	
	\maketitle
	
	\begin{abstract}
		In this paper, we present a Nonlinear-Proportional Integrator (N-PI) disturbance observer (DOB) to enhance the motion tracking of the performance of a surface-mounted Permanent Magnet Synchronous Motor (SPMSM) in rapidly speed varying regions. By presenting an N-PI-DOB for load torque estimation with torque modulation technique, we show that the tracking error dynamics of angular position/velocity are coupled with tracking errors of currents loop and estimation errors. After analyzing disturbances of currents tracking error dynamics, we design the N-PI-DOB and Lyapunov-based nonlinear currents controller to enhance the motion tracking performances. With these N-PI-DOBs and motion controllers, we analyze the stability of motion tracking error dynamics and estimation error dynamics. We experimentally perform a comparative study with/without the N-PI-DOB to verify the effectiveness of the proposed method in the condition of the unknown load torque and rapidly speed-varying.
\end{abstract}

\begin{IEEEkeywords}
	Disturbance Observer, Permanent Magnet Synchronous Motor, Lyapunov Method, Velocity Tracking.
\end{IEEEkeywords}

{}

\definecolor{limegreen}{rgb}{0.2, 0.8, 0.2}
\definecolor{forestgreen}{rgb}{0.13, 0.55, 0.13}
\definecolor{greenhtml}{rgb}{0.0, 0.5, 0.0}

\section{Introduction}
%

\IEEEPARstart{P}{recision} motion control of a Permanent Magnet Synchronous Motor (PMSM) has drawn attention from the manufacturing industry because of its efficiency and compact structure~\cite{lee2017lpv}.
To achieve precision position/velocity control performances, the construction of two connected systems, i.e., electro-mechanical system, is essential.
One is an outer-loop control system related to the motor shaft dynamics, i.e., mechanical system. The other is an inner-loop control system related to the inverter dynamics, i.e., an electrical system~\cite{jakovljevic2021control}.
The outer-loop control system generates desired torque reference once desired motion reference of the motor shaft is given.
Then, the inner-loop control system, such as either  a field orient control~(FOC) or direct torque control (DTC), generates phase voltages to achieve motor torque as the desired torque reference~\cite{casadei2002foc}.
In the outer/inner-loop control system, a feedback controller along with a feed-forward controller is utilized.
However, disturbance terms in outer/inner-loop control systems, caused by  un-modeled nonlinear dynamics, hinder PMSMs from having robust position/velocity tracking performances. 

For that reason, a linear DOB with a feedback controller for PMSMs is presented in~\cite{zhu2000speed}, and it ensures the local exponential stability by taking parameter fluctuations and the external load torque as the lumped disturbance.
Further, an extended state-based DOB with the sliding mode motion controller has been applied at the servo-tracking system and showed comparative studies with/without the DOB~\cite{lee2016integral}.
In addition, a high-order terminal sliding mode-based DOB was implemented in~\cite{feng2012high} to estimate the mechanical parameters of PMSMs.
Later, a super-twisting sliding mode-based disturbance observer for PMSMs was developed in~\cite{hou2020composite}, showing its outperformance by presenting the comparative study.
In~\cite{hou2020gpio}, disturbances in inverter dynamics are considered lumped disturbances. Moreover, a generalized proportional integral observer (GPIO) was introduced to estimate the lumped disturbance.
Although the outer-loop controllers presented in~\cite{lee2016integral,feng2012high,zhu2000speed,hou2020composite} estimate and compensate disturbance, only related with the outer-loop system, they do not consider and compensate disturbances, related to inner-loop control system caused by nonlinear properties of inverter dynamics. 
The disturbances existing in the inner-loop control system can hinder not only performances in steady-state speed operating regions but also in transient speed operating regions.

%

%

%
%
%
To compensate for the disturbances of the inner-loop control system, \cite{liu2017robust} have applied a sliding mode disturbance compensation and predictive current controller. However, the structure of DOB has limits in that it does not contain the integration term of estimation errors which has been reported that enhance the performance of the control/estimation system.
%
%
%
%
Recently, a Lyapunov-based nonlinear inner-loop controller is presented to enhance the motion tracking performance of a surface-mounted Permanent Magnet Synchronous Motor~(SPMSM) in~\cite{jeong2020nonlinear}, and it shows the comparative study of the motion tracking performances with the conventional inner-loop controllers.
Although this paper shows that the velocity tracking performance can be enhanced by enhancing the inner-loop control performance, there was no disturbance-observer-based feed-forward compensation technique in both outer/inner-loop control systems.
%
	 
%
For that reason, in this paper, we present a nonlinear Proportional-Integrator~(N-PI) outer/inner-loop disturbance observer~(DOB) and precision motion controller to enhance the speed tracking performances in  rapidly speed varying regions for   SPMSM.
The main contributions of this paper can be summarized as  follows:
\begin{itemize}
  \item[1)] A new N-PI DOB is proposed to estimate the outer/inner-loop disturbance, i.e., unknown load torque, nonlinearity in the inverter system. Then, a Lyapunov redesign-based nonlinear currents controller  is presented.
  \item[2)]  After deriving the tracking error dynamics of SPMSM, we proved that the disturbances in the inner-loop system and current tracking errors converge to the bounded ball. Further, we show that the load torque estimation error and velocity tracking error converge to bounded value.
  \item[3)] Finally, with the experimental validation results, we show that the proposed N-PI-DOBs in the outer/inner loop system guarantee the uniform motion tracking performances not only in steady-state region but also in rapidly speed-varying regions.
%

\end{itemize}

To describe the pre-mentioned contribution of this paper, in section~\ref{sec:section2}, we describe the modeling of SPMSM, torque modulation, and N-PI-DOB for load torque estimation. After then, we derive tracking error dynamics of SMPSM and discuss the possible disturbances which hinder convergence of motion tracking errors. In Section~\ref{sec:section3}, we present the N-PI-DOB for inner-loop disturbance estimation and stability analysis of tracking error dynamics of SPMSM.  After presenting the experimental validation results, a conclusion will follow. 

\section{ SPMSM Modelling and Torque Modulation and N-PI-DOB for Load Torque Estimation}
\label{sec:section2}

This section describes SPMSM dynamic equation and the outer-loop controller, including torque modulation and nonlinear proportional integral disturbance observer~(N-PI-DOB) for load torque estimation.
Based on the torque modulation and N-PI-DOB, we derive tracking error dynamics of SPMSM and discuss possible disturbance terms in inner-loop control system which may hinder the regulation of motion tracking errors.

\subsection{Electro-Mechanical Dynamics of SPMSM}

To begin with, let us define $\theta, \omega, i_\alpha,$ and $i_\beta$ as state variables of SPMSM where
$\theta$  is the rotor's angular position~(rad),
$\omega$ is the rotor's angular velocity~(rad/s), and $i_\alpha$, $i_\beta$ are currents~(A) of $\alpha$$\beta$ frame.
Then, the dynamics of SPMSM can be represented as
\begin{align}
	\begin{split}
		\label{eq:PMSM_dynamics}
		\dot{\theta} &= \omega\\			
		\dot \omega  &= - \frac{B}{J}\omega +\frac{1}{J}{\tau _m}   - \frac{1}{J}{\tau _L}   \\
		{\dot i}_\alpha  &=  { - \frac{R}{L}{i_\alpha } + \frac{P\Phi}{L} \sin(P\theta) \omega  + \frac{1}{L}{v_\alpha }   }     \\
		{\dot i}_\beta  &= - \frac{R}{L}{i_\beta } - \frac{P\Phi}{L} \cos(P\theta)  \omega + \frac{1}{L}{v_\beta }   \\
	\end{split}
\end{align}
where	$B$ is the viscous friction coefficient~(N$\cdot$m$\cdot$s/rad), and $J$ is the inertia of the motor~(kg$\cdot$m$^2$).
$v_\alpha$ and $v_\beta$ are voltages~(V). $\tau_L$ is the load torque (Nm).
$R$ and $L$ are the resistance of the phase winding~($\Omega$) and the inductance of the phase winding~(H).
$\Phi$ is the rotor's magnetic flux~(Wb).
$P$ is the number of pole pairs.
%
%
%
For simplicity of notations, let us define $ {S} : =\sin(P\theta)$, and $ {C}:= \cos(P\theta)$. Given $i_\alpha$ and $i_\beta$, the motor torque of   SPMSM, ${\tau _m}$, can be represented as 
	\begin{equation}
	{\tau _m} =  - \frac{3}{2}P\Phi  {S} {i_\alpha } + \frac{3}{2}P\Phi  {C} 	{i_\beta }.
	\label{eq:Tm}
\end{equation}
\subsection{Torque Modulation in the presence of Load Torque }
In this subsection, we introduce torque modulation in the presence of the unknown load torque. 
Given desired position/velocity reference $\theta_d$ and $\omega_d$, and measurements $\theta$ and $\omega$, the torque modulation generates desired current references, $i_\alpha^d$ and $i_\beta^d$.
%
To begin with, let us define the estimation states for the outer-loop control system such as, $\hat{\omega}$ and $\hat{\tau}_L$.  $\hat{\omega}$ is a velocity estimation state which is only used for load torque estimation. $\hat{\tau}_L$ is a load torque estimation state.
The detail structure of $\hat{\omega}$ and $\hat{\tau}_L$ will be presented in~Sec.~\ref{sec:section2_C}.  
Given SPMSM states in~(\ref{eq:PMSM_dynamics}), the desired references  and estimation states,	let us define tracking errors and estimation errors such as 
\begin{equation}
	\begin{split}
		\label{eq:error_state}
		&e_\theta = \theta^d-\theta, ~ ~~ e_\omega = {\omega ^d} - \omega,  ~  \tilde{\omega} = \omega-\hat{\omega}
		\\
		&e_\alpha = i_\alpha ^d - {i_\alpha }, ~~ e_\beta  = i_\beta ^d - {i_\beta },~   \tilde{\tau}_L  = {\tau}_L -\hat{\tau}_L
	\end{split}
\end{equation}
where $e_\theta$ is the rotor position tracking errors and $e_\omega$ is the rotor velocity tracking errors. $e_\alpha$ and $e_\beta$ are currents tracking errors,	$\tilde{\omega}$ is  the estimation error of rotor velocity and  $\tilde{\tau}_L$  is the estimation error of unknown load torque. 
With these notations, we can design a desired torque reference, $\tau_m^d$, such as
\begin{align}
	\begin{split}
		\label{eq:Tm_d}
		\tau_m^d &:= ({J}\dot {\omega}^d+{B}\omega^d +k_\theta e_\theta +k_\omega e_\omega +\hat{\tau}_{L} )\\
		&=- \frac{3}{2}P{\Phi}  {S}  {i_\alpha^d } + \frac{3}{2}P{\Phi}  {C}   {i_\beta^d }\\
	\end{split}
\end{align}
where $k_\theta$ and $k_\omega$ are control gains for the stabilization of the mechanical system.
From the characteristics of trigonometric functions,   $i_\alpha^d$ and $i_\beta^d$  are computed as
\begin{equation}
	\label{eq:NTM}
	\begin{split}
		i_\alpha^d &:=-\frac{2{\tau_m^d}}{3P{\Phi}} {S}, \,\,
		i_\beta^d :=\frac{2{\tau_m^d}}{3P{\Phi}} {C} .\\
	\end{split}
\end{equation}

\subsection{Nonlinear-PI-DOB for Outer-Loop Control System}
\label{sec:section2_C}
In this subsection, we present a Nonlinear PI-based Disturbance Observer~(N-PI-DOB) for outer-loop control system to estimate $\tau_L$.
Suppose that mechanical dynamics of SPMSM is given by~(\ref{eq:PMSM_dynamics}), and both $\theta$ and $\omega$ are measurable. 	%
Then, with parameters $l_{p,\tau}$, $l_{i,\tau}$,  and $\tilde{\omega}^{\max}>0$,  the  N-PI-DOB for load torque estimation is as follow: 
\begin{equation}
	\begin{split} 
		\dot {\hat{\omega}}   &=   
  - \frac{{B}}{{J}}\hat\omega+\frac{1}{{J}}{\tau _m^d} -\frac{1}{{J}}\hat{\tau}_{L} \\
		\hat{\tau}_L &=   - \mu_\tau(\tilde{\omega}) - l_{i,\tau} \int \tilde{\omega} d\tau .
	\end{split}
	\label{eq:tau_hat}
\end{equation}
The nonlinear function $\mu_\tau(\tilde{\omega})$ is 
$
	\mu_\tau(\tilde{\omega})=l_{p,\tau} \frac{\tilde{\omega} }{ (\tilde{\omega}/\tilde{\omega}^{\max})^2+1 }
$
where $l_{p,\tau}$ is the estimation gain related to the peak value of the $\mu_\tau(\tilde{\omega})$ as depicted in~Fig.~\ref{fig:dot_hat_tau_L}.
The derivative of $\hat{\tau}_L$ with respective to time is represented as follow:
\begin{equation*}
	\begin{split} 
		\dot{\hat{\tau}}_L  
		&=        -\frac{d\mu_\tau(\tilde{\omega})}{d\tilde{\omega}} \frac{d\tilde{\omega}}{dt} -l_{i,\tau} \tilde{\omega}.
		\label{eq:dot_hat_tau_L}
	\end{split}
\end{equation*}
	By defining $\partial \mu_\tau(\tilde{\omega}) :=\frac{d\mu_\tau(\tilde{\omega})}{d\tilde{\omega}}$, $ {k}_m := \frac{3}{2}P {\Phi}$, from~(\ref{fig:dot_hat_tau_L}), (\ref{eq:error_state}) and (\ref{eq:tau_hat}), the estimation error dynamics of $\tilde{\omega}$ and $\tilde{\tau}_L$ becomes \begin{equation}
	\begin{split}
		\dot {\tilde{\omega}}   &= -\frac{{B}}{{J}}\tilde\omega    +  \frac{{k}_m }{J}
		\begin{bmatrix}
		 S & - C
		\end{bmatrix}
		\begin{bmatrix}
			e_\alpha \\ e_\beta
		\end{bmatrix}-\frac{1}{{J}} \tilde{\tau}_{L} \\
		\dot{\tilde{\tau}}_L &=  \left\{ -\partial  \mu_\tau(\tilde{\omega}) \frac{{B}}{{J}}+l_{i,\tau}\right\}\tilde{\omega}  - \frac{\partial  \mu_\tau(\tilde{\omega})}{{J}}\tilde{\tau}_L \\
		&  ~~~~~~~~~~~~~~ +      \frac{\partial\mu_\tau(\tilde{\omega}){k}_m }{J} \begin{bmatrix}
		 S  &  -C
		\end{bmatrix}
		\begin{bmatrix}
			e_\alpha \\ e_\beta
		\end{bmatrix}+ \dot{\tau}_L.
	\end{split}
	\label{eq:estim_dynamics}
\end{equation}   
For simplicity of notation, let us define the mechanical tracking error   states $\mathbf{e}_{m}$, currents tracking error states $\mathbf{e}_{\alpha\beta}$, and the estimation error states for outer-loop control system $\mathbf{d}_{\tau}$ as 
\begin{equation}
	\mathbf{e}_{m}=\begin{bmatrix} e_\theta & e_\omega\end{bmatrix}^T, \mathbf{e}_{\alpha\beta}=\begin{bmatrix} e_\alpha & e_\beta\end{bmatrix}^T, \mathbf{d}_{\tau}=\begin{bmatrix} \tilde{\omega} & \tilde{\tau}_L \end{bmatrix}^T.
	\label{eq:Errors}
\end{equation}	 
With these notations,  (\ref{eq:estim_dynamics}) can be represented as 
\begin{equation}
	\label{eq:dtau_dynamics}
	\dot{\mathbf{d}}_\tau  = A_{\tau}(\tilde{\omega})\mathbf{d}_\tau    +   G_\tau(\tilde{\omega})\mathbf{e}_{\alpha\beta} +B_\tau \dot{\tau}_L  \\
\end{equation}
where $A_{\tau}(\tilde{\omega})  
=   A_{\tau0} +\partial\mu_\tau(\tilde{\omega}) {A_{\tau1} }$, $B_\tau = \begin{bmatrix} 0 & 1\end{bmatrix}^T$
$G_\tau(\tilde{\omega} )  = \frac{{k}_m}{J}
		\begin{bmatrix}
			S   & -C \\
			\partial\mu(\tilde{\omega})  S   & -\partial\mu(\tilde{\omega}) C 
		\end{bmatrix}$,
 $A_{\tau0}=  
{\begin{bmatrix}
		-\frac{{B}}{{J}} &   -\frac{1}{{J}} \\
		l_{i,\tau} &  0 \\
\end{bmatrix}}$, 
and $A_{\tau1}={\begin{bmatrix}
		0 & 0\\
		-\frac{{B}}{{J}} &  - \frac{1}{{J}} \\
\end{bmatrix}}$. 
The value of $\dot{\tau}_L$ is related with the variance of injected $\tau_L$. So, in practice, without loss of generality we can assume that there exists $\delta_\tau$ such as
\begin{equation}
	\label{eq:delta_tau}
	{\delta}_\tau = \sup_{t \in [0, \infty )} { \norm{\dot{\tau}_L}}.
\end{equation}
\subsection{Tracking Error Dynamics of SPMSM with N-PI-DOB and Torque Modulation}

To analyze the stability of motion tracking control with the torque modulation and N-PI-DOB and to design the inner-loop controller, in this subsection, we derive the tracking error dynamics of SPMSM.
By subtracting~(\ref{eq:Tm}) and (\ref{eq:Tm_d}), the following relationship can be derived
\begin{align}
	\begin{split}
		\label{eq:Tm_2} 
  \tau_m & =   \tau_m^d +   k_m ( {S}  {e_\alpha } -   {C}   {e_\beta }).\\ 
	\end{split}
\end{align}
By plugging equations~(\ref{eq:PMSM_dynamics}),~(\ref{eq:Tm_d})  and (\ref{eq:Tm_2}) into time derivative of (\ref{eq:error_state}), tracking error dynamics  can be derived as 
\begin{align}
	\begin{split}
		\label{eq:ded}
		\dot e_\theta =& e_\omega\\
		\dot e_\omega =&  - \frac{k_\theta}{J} e_\theta -  \frac{B+k_\omega }{J}  e_\omega
		- \frac{{k}_m }{J} (S  e_\alpha - C e_\beta)  +\frac{1}{{J}} \tilde{\tau}_L\\
		\dot e_\alpha =&  \dot{i}_\alpha ^d+ \frac{R}{L} i_\alpha	- \frac{1}{L}P {\Phi} S \omega 
		-  \frac{1}{L} v_\alpha     
		%
		\\
		{\dot e}_\beta   =&  \dot{i}_\beta ^d +  \frac{R}{L} i_\beta   +  \frac{1}{L}P {\Phi}  C  \omega  -  \frac{1}{L}{v_\beta }   .
		\\
	\end{split}
\end{align}
Tracking error dynamics shows that $e_\alpha$, $e_\beta$ and  $\tilde{\tau}_L$  are related to the dynamics of   $e_\omega$.
Therefore, designing $v_\alpha$ and $v_\beta$ to regulate $e_\alpha$ and $e_\beta$ to be zero is essential.
Let us define voltage references, $v_\alpha^d$ and $v_\beta^d$, commanded to inverter systems as
\begin{equation}
	\begin{split}
		v_\alpha^d = Ri_\alpha^d-P\Phi S \omega -u_\alpha,~~ v_\beta^d = Ri_\beta^d + P\Phi C \omega-u_\beta
	\end{split}
\label{eq:vab_d}
\end{equation} 
where $u_\alpha$ and $u_\beta$ are feedback control inputs which will be discussed in Sec.~\ref{sec:section3}.
Due to the inverter nonlinearity, there exists voltage difference  between actual phase voltages and desired voltage reference, and it can be represented as   
\begin{equation}
	\begin{split}
		v_\alpha &= v_\alpha^d+ e_{v\alpha},~~~ v_\beta=v_\beta^d+ e_{v\beta}.
	\end{split}
\label{eq:v_error}
\end{equation}
where $e_{vj}$    are the voltage differences between $v_j^d$ and $v_j$, $j\in{\left\{\alpha,\beta\right\}}$. 
Further, from~(\ref{eq:ded}), we see that $\dot{i}_\alpha^d$ and $\dot{i}_\beta^d$ are coupled with the dynamics of $e_\alpha$ and $e_\beta$, and they can be disturbances for convergence of $e_\alpha$ and $e_\beta$. 
By considering these possible disturbance terms for inner-loop control system, let us define inner-loop disturbances as
\begin{equation}
	d_\alpha:={   -{L}\dot{i}_\alpha ^d    + e_{v\alpha}  }, ~~d_\beta:={    - {L}\dot{i}_\beta ^d   + e_{v\beta} }.
	\label{eq:dab}
\end{equation}
%
By plugging~(\ref{eq:vab_d}), (\ref{eq:v_error}) and (\ref{eq:dab}) into (\ref{eq:ded}), the we can represent SPMSM error dynamics  such as
\begin{align}
	\begin{split}
		\label{eq:ded2}
		\dot e_\theta =& e_\omega\\
		\dot e_\omega =&  - \frac{k_\theta}{{J}} e_\theta -  \frac{k_\omega+{B}}{{J}}  e_\omega
		- \frac{{k}_m }{J}(S  e_\alpha - C e_\beta) +\frac{1}{{J}} \tilde{\tau}_{L}\\
		\dot e_\alpha =&
		-  \frac{ {R}}{ {L}} e_\alpha   +  \frac{1}{ {L}}{u_\alpha }  -  \frac{1}{{L}}
		{d_\alpha}\\
		{\dot e}_\beta   =&
		-  \frac{{R}}{{L}} e_\beta   +  \frac{1}{{L}}{u_\beta  }  - \frac{1}{{L}}
		{d_\beta}.\\
	\end{split}
\end{align}  
For simplicity of notations, let us define $\mathbf{u}_{\alpha\beta} =\begin{bmatrix} u_\alpha  & u_\beta  \end{bmatrix}^T$ and 
$\mathbf{d}_{\alpha\beta}=\begin{bmatrix} d_\alpha & d_\beta \end{bmatrix}^T$.
With (\ref{eq:Errors}), we can represent~(\ref{eq:ded2}) as
\begin{figure}[!t]
	\centering	
	\subfigure[  The graph of ${\mu}(x)=      l_{p,x}  \frac{   x   } {       (x/x^{\max})^2     +1     }$. ]
	{\includegraphics[width=0.97\linewidth ]{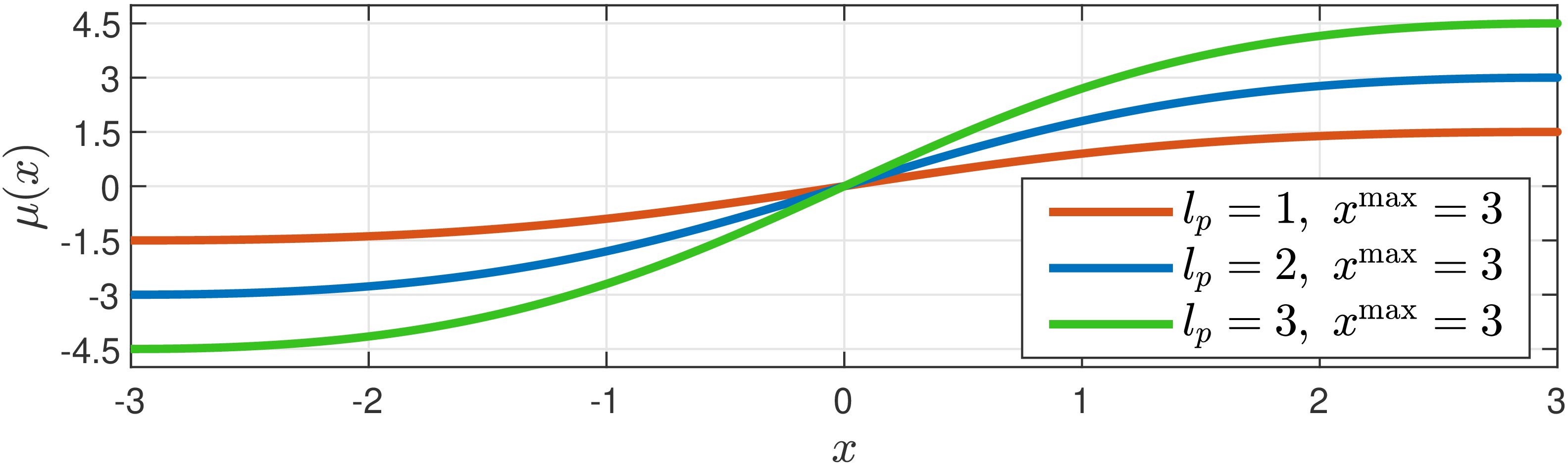}
	}
	\subfigure[  The graph of $\partial {\mu}(x)=     l_{p,x} \frac{ -(x/x^{\max})^2+1 } {     ( (x/x^{\max})^2     +1 )^2  }$. ]
	{\includegraphics[width=0.97\linewidth ]{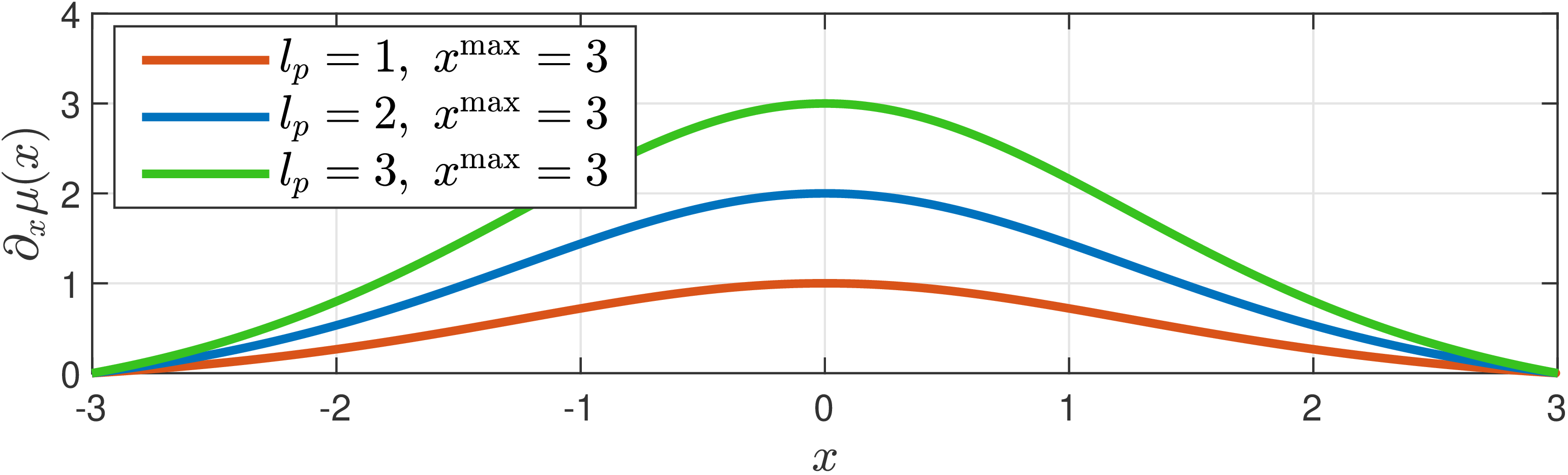}
		\label{fig:partial_Mu}
	}
	\caption{   The graph of nolinear function  $\mu(x)$ and $\partial_x{\mu}(x)$}
	\label{fig:dot_hat_tau_L}
\end{figure}
\begin{equation}
	\begin{split}
		\dot {\mathbf{e}}_{m} &= A_m \mathbf{e}_{m} +G_m\mathbf{e}_{\alpha\beta} +B_m \mathbf{d}_{\tau}\\
		\dot{\mathbf{d}}_\tau  &= A_{\tau}(\tilde{\omega})\mathbf{d}_\tau    +   G_\tau(\tilde{\omega})\mathbf{e}_{\alpha\beta} +B_\tau \dot{\tau}_L  \\
		\dot{\mathbf{e}}_{\alpha\beta}  &=  A_{\alpha\beta} {\mathbf{e}}_{\alpha\beta} + B_{\alpha\beta} ({\mathbf{u}}_{\alpha\beta}  -  {\mathbf{d}}_{\alpha\beta})\\
	\end{split}
\label{eq:ded_final}
\end{equation}
where
\begin{equation*}
	\begin{split}
		& A_m =\begin{bmatrix}
			0 & 1\\
			-\frac{k_\theta}{{J}} &	-\frac{k_\omega+{B}}{{J}}
		\end{bmatrix},
		G_m = -\frac{{k}_m }{J}\begin{bmatrix}
			0	& 0  \\
		 S &  -C\\
		\end{bmatrix}, \\
		&B_m = \begin{bmatrix}
			0 &0		\\
			0 &\frac{1}{J}		\\
		\end{bmatrix},
		A_{\alpha\beta} = \begin{bmatrix}
			-\frac{{R}}{{L}} & 0		\\
			0 & 	-\frac{{R}}{{L}}		\\
		\end{bmatrix}, 
		B_{\alpha\beta} = \begin{bmatrix}
		 \frac{1}{{L}} & 0		\\
			0 & 	 \frac{1}{{L}}		\\
		\end{bmatrix}.
	\end{split}
\end{equation*}
%
Based on this tracking error dynamics, we are going to regulate $\mathbf{e}_{\alpha\beta}$ by designing N-PI-DOB for inner-loop control system and Lyapunov currents controller. Then, we discuss the stability of the motion tracking controller of SPMSM.

\begin{figure*}[!t]
	\centering	
	{
		\includegraphics[width=0.96\linewidth ]{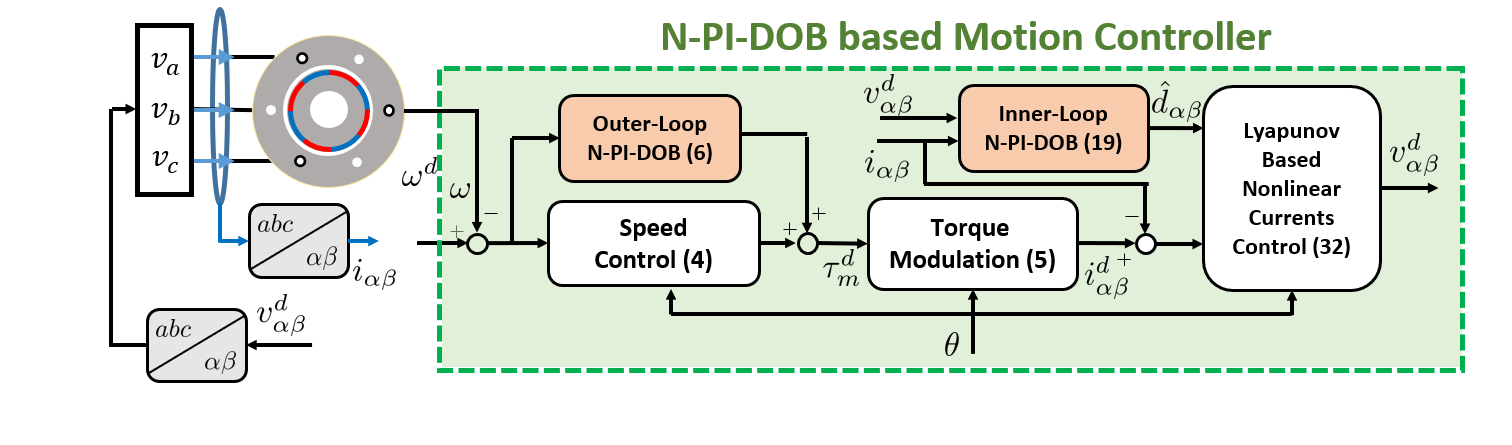}
	}
	\caption{  Schematic of the proposed N-PI-DOB-based velocity control structure of SPMSM. }
	\label{fig:N_PI_DOB_Control}
\end{figure*} 

\section{N-PI-DOB for Inner-Loop Control System, Nonlinear Currents Controller and Stability Analysis} 
\label{sec:section3}
%
In this section, we present a nonlinear-PI disturbance observer for inner-loop control system. Then, we briefly introduce a Lyapunov based nonlinear currents controller.
Then, we perform the stability analysis of the proposed motion controller of SPMSM.
%

\subsection{Nonlinear-PI-DOB for Inner-Loop Control System}
\label{sec:N-PI-DOB_InnerLoop}
In this subsection, we present a Nonlinear PI-based Disturbance Observer~(N-PI-DOB) for inner-loop control system to estimate ${d}_{\alpha}$ and ${d}_{\beta}$ in~(\ref{eq:dab}).
Let us define estimation states of $e_\alpha,e_\beta$ as  $\hat{e}_j,~ j\in\{\alpha,\beta\}$ and estimation states of $d_\alpha,d_\beta$  as $\hat{d}_j,~ j\in\{\alpha,\beta\}$. Furthermore, let us define the estimation error of $e_j$ and  $d_j$ such as
\begin{equation}
	\begin{split}
		&\tilde{e}_j = e_j - \hat{e}_j, ~~~ \tilde{d}_j = d_j - \hat{d}_j,~~~ j \in \left\{ \alpha, \beta \right\}
	\end{split}
	\label{eq:ia_tilda}
\end{equation}
where, $\tilde{e}_j$ is the estimation error of  $e_j$,  $\tilde{d}_j$ is the estimation error of inner-loop disturbance $d_j$.   
%
%
 Then, with parameters  $l_{p,e}$, $l_{i,e}$,  and $\tilde{e}^{\max}>0$,  we can design the N-PI-DOB for inner-loop control system as
\begin{equation}
	\begin{split}
		\dot{\hat{e}}_{j } &=   - \frac{{R}}{{L}} \hat{e}_j   + \frac{1}{{L}} u_j       - \frac{1}{{L}} \hat{d}_j \\
		%
		\hat{d}_j &=   - \mu_e(\tilde{e}_j) - l_{i,e} \int \tilde{e}_j d\tau,~~~~j \in \left\{ \alpha, \beta \right\} .
	\end{split}
	\label{eq:hat_d_ab}
\end{equation}
The nonlinear function $\mu_e(\tilde{e}_j)$ is 
\begin{equation}
\mu_e(\tilde{e}_j)=l_{p,e} \frac{\tilde{e}_j }{ (\tilde{e}_j/\tilde{e}^{\max})^2+1 }
\label{eq:mu_e}
\end{equation}
where $l_{p,e}$ is the estimation gain related to the peak value of the $\mu_e(\tilde{e}_j)$ as depicted in~Fig.~\ref{fig:dot_hat_tau_L}.
The derivative of $\hat{d}_j$ in (\ref{eq:hat_d_ab}) with respective to time is represented as follow:
\begin{equation}
	\begin{split} 
		\dot{\hat{d}}_j  
		&=        -\frac{d\mu_e(\tilde{e}_j)}{d\tilde{e}_j} \frac{d\tilde{e}_j}{dt} -l_{i,e} \tilde{e}_j.
	\end{split}
\label{eq:dot_hat_d_ab}
\end{equation}
By defining $\partial \mu_e(\tilde{e}_j) :=\frac{d\mu_e(\tilde{e}_j)}{d\tilde{e}_j}$,   from~(\ref{eq:ded2}),  (\ref{eq:hat_d_ab}) and (\ref{eq:dot_hat_d_ab}), the estimation error dynamics of $\tilde{e}_j$ and $\tilde{d}_j$ becomes
\begin{equation}
	\begin{split}
		\dot{\tilde{e}}_j   &= -\frac{R}{L}\tilde{e}_j    -\frac{1}{{J}} \tilde{d}_{j} \\
		\dot{\tilde{d}}_j &=  \left\{ -\partial  \mu_e(\tilde{e}_j) \frac{R}{L}+l_{i,e}\right\}\tilde{e}_j  - \frac{\partial  \mu_e(\tilde{e}_j)}{L}\tilde{d}_j   + \dot{d}_j.
	\end{split}	
\label{eq:DOB_CurrentsDynamics}
\end{equation}   
For simplicity of notation, let us define the estimation error states for inner-loop control system as  
$	\mathbf{d}_{j}=\begin{bmatrix} \tilde{e}_j & \tilde{d}_j\end{bmatrix}^T$, and (\ref{eq:DOB_CurrentsDynamics}) can be represented as 
\begin{equation}
	\label{eq:chi_dynamics}
	\dot{\mathbf{d}}_j  = A_{e}(\tilde{e}_j)\mathbf{d}_j  +B_e \dot{d}_j  \\
\end{equation}
where $A_{e}(\tilde{\omega})  
=   A_{e0} +\partial\mu_e(\tilde{e}_j) {A_{e1} }$, $B_e = \begin{bmatrix} 0 & 1\end{bmatrix}^T$
$A_{e0}=  
{\begin{bmatrix}
		-\frac{R}{L} &   -\frac{1}{L} \\
		l_{i,e} &  0 \\
\end{bmatrix}}$, 
and $A_{e1}={\begin{bmatrix}
		0 & 0\\
		-\frac{R}{L} &  - \frac{1}{L} \\
\end{bmatrix}}$. 
	The value of $\dot{d}_j$ is related with the variance of $\dot{i}_j^d$ and $e_{vj}$. So, in practice, without loss of generality we can assume that there exists $\delta_j$ such as
	\begin{equation}
		{\delta}_j = \sup_{t \in [0, \infty )} { \norm{\dot{d}_j}}.
	\end{equation}
	%

	\begin{theorem}{\bf(Exponential convergence of $ {\mathbf{d}_j}$ to bounded ball $B_\varepsilon$ in finite time $T_f$):}
		\label{th:th_currentDOB}
		%
Suppose  there exists $\delta_j$. Further, suppose that there exist $e_j$ and $u_j$.  Then, the estimation law~(\ref{eq:hat_d_ab}) guarantees that the estimation errors for inner-loop disturbances, $\mathbf{d}_j$,  converge exponentially to  the bounded ball $B_\varepsilon $   within finite time $T_f$.
	\end{theorem}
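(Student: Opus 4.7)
The plan is to prove Theorem~\ref{th:th_currentDOB} by constructing a quadratic Lyapunov function for the estimation error dynamics~(\ref{eq:chi_dynamics}) and applying a comparison-lemma argument to extract both the exponential rate and the finite entering time $T_f$. Because $\dot d_j$ is uniformly bounded by $\delta_j$, I expect the conclusion to take the form of an ISS-type exponential bound with an ultimate radius $\varepsilon$ proportional to $\delta_j$.

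First, I would verify that $A_e(\tilde e_j) = A_{e0} + \partial\mu_e(\tilde e_j)\,A_{e1}$ is Hurwitz uniformly in $\tilde e_j$. A direct $2\times 2$ computation gives $\det A_e = l_{i,e}/L$ (independent of $\tilde e_j$) and $\operatorname{tr} A_e = -(R + \partial\mu_e(\tilde e_j))/L$. Since $\partial\mu_e$ is globally bounded with a strictly negative minimum of order $-l_{p,e}/8$ (visible from Fig.~\ref{fig:partial_Mu}), choosing $l_{p,e}$ small enough relative to $R$ and taking $l_{i,e}>0$ ensures that every matrix in the family $A_e(\tilde e_j)$ has negative trace and positive determinant, hence is Hurwitz.

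Next, I would search for a common Lyapunov matrix $P = P^T > 0$ satisfying $P A_e(\tilde e_j) + A_e^T(\tilde e_j) P \preceq -Q$ for some $Q \succ 0$, uniformly in $\tilde e_j$. Since $A_e$ depends affinely on $\partial\mu_e$ which lives in a compact interval, it suffices to verify the LMI at the two extreme values of $\partial\mu_e$. With such a $P$ in hand, defining $V(\mathbf{d}_j) = \mathbf{d}_j^T P \mathbf{d}_j$ and differentiating along~(\ref{eq:chi_dynamics}) yields
\begin{equation*}
  \dot V \leq -\lambda_{\min}(Q)\,\|\mathbf{d}_j\|^2 + 2\|P B_e\|\,\|\mathbf{d}_j\|\,\delta_j ,
\end{equation*}
and a standard Young's-inequality split then gives $\dot V \leq -\alpha V + \beta\,\delta_j^2$ for positive constants $\alpha,\beta$ depending on $P$, $Q$, and $B_e$.

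Applying the comparison lemma produces $V(t) \leq V(0)e^{-\alpha t} + (\beta/\alpha)\,\delta_j^2$, so $\|\mathbf{d}_j(t)\|$ decays exponentially at rate $\alpha/2$ into the ball $B_\varepsilon$ with $\varepsilon^2 = \beta\,\delta_j^2/(\alpha\lambda_{\min}(P))$, and $T_f$ follows by solving $V(0)e^{-\alpha T_f} = (\beta/\alpha)\,\delta_j^2$. The main obstacle I anticipate is the common-$P$ step: the family $A_e(\tilde e_j)$ is parameterised by a bounded but sign-indefinite scalar $\partial\mu_e$, and if the two vertex LMIs turn out to be incompatible I would fall back on a parameter-dependent Lyapunov function $V = \mathbf{d}_j^T P(\tilde e_j)\,\mathbf{d}_j$ and absorb the time derivative of $P(\tilde e_j)$ as a perturbation, at the cost of tighter gain restrictions on $(l_{p,e},\,l_{i,e},\,\tilde e^{\max})$.
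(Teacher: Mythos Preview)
Your proposal is correct and lands on essentially the same Lyapunov construction as the paper, but the two arguments diverge in how the parameter dependence on $\partial\mu_e(\tilde e_j)$ is handled. The paper does \emph{not} seek a common $P$ for the whole family $A_e(\tilde e_j)$: it fixes $P_j$ by solving $A_{e0}^T P_j + P_j A_{e0} = -Q_{e0}$ for the nominal matrix only, and then treats the $\partial\mu_e\,A_{e1}$ contribution as an indefinite perturbation bounded via $|\partial\mu_e|\,\|Q_{e1}\|$ with $Q_{e1}:=A_{e1}^T P_j + P_j A_{e1}$. The disturbance term $2\mathbf{d}_j^T P_j B_e \dot d_j$ is absorbed not by Young's inequality but by an add--subtract trick yielding $\dot V_j \le -\gamma_e(\tilde e_j)V_j$ on the set $\|\mathbf{d}_j\|\ge\varepsilon$, after which Gronwall--Bellman gives $T_f$. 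The existence of gains $(l_{p,e},l_{i,e})$ making the resulting rate $\gamma_e^*>0$ is not proved analytically but deferred to a numerical check. Your polytopic/vertex-LMI route is cleaner and, when feasible, yields a uniform $Q$ directly without the separate perturbation bound; your trace--determinant calculation (and the observation that $\min\partial\mu_e = -l_{p,e}/8$) in fact supplies exactly the kind of explicit gain condition the paper leaves implicit. Conversely, the paper's nominal-plus-perturbation split avoids having to establish feasibility of a common $P$ and makes the dependence on $l_{p,e}$ transparent, at the cost of a looser estimate and the deferred numerical verification.
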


	\begin{proof}
		Let us define the Lyapunov function candidate $V_j(\mathbf{d}_j)$ as
		\begin{equation}
			\begin{split}
				\label{eq:V_chi}	
				V_j(\mathbf{d}_j)& =    \mathbf{d}_j^T P_j \mathbf{d}_j,~~j\in\left\{\alpha, \beta\right\}
			\end{split}
		\end{equation}
			where $P_j=P_j^T >0$.
		The time derivative of $V_j$ becomes
		\begin{equation}
			\begin{split}
				&\dot{V}_j(\mathbf{d}_j) =  \mathbf{d}_j^T( A_{e0}^T P_j   +  P_j A_{e0} ) \mathbf{d}_j \\ &~~~~~~~ +    {\partial \mu}(\tilde{e}_j) \mathbf{d}_j^T(  A_{e1}^T P_j + P_j A_{e1}  )\mathbf{d}_j+2 \mathbf{d}_j^T  P_jB_e {\dot{d}_{j}}.\\
			\end{split}
			\label{eq:Vdot_chi1}
		\end{equation}
		Since we can select $l_{i,e}>0$ satisfying $\sigma(A_{e0}) \subset {\mathbb{C}}^o_{-}$, for any $Q_{e0}=Q_{e0}^T>0$, there exists $P_j=P_j^T >0$ such that~\cite{khalil2002nonlinear}
		\begin{equation}
			\begin{split}
				A_{e0}^T P_j +  P_jA_{e0} = - Q_{e0} .
			\end{split}
		\end{equation} 
		In addition, let us define $Q_{e1}:=A_{e1}^TP_j   +  P_jA_{e1}$. Here, we use the two norm of vector $x$ as $\norm{x}$ and the induced matrix norm of matrix $X$ as $\norm{X}$.
		Then, we see that (\ref{eq:Vdot_chi1}) becomes
		\[
		\dot{V}_j(\mathbf{d}_j)
		= -\mathbf{d}_j^T Q_{e0} \mathbf{d}_j + {\partial \mu}(\tilde{e}_j)   \mathbf{d}_j^T Q_{e1}    \mathbf{d}_j +2 \mathbf{d}_j^T  P_j B_e {\dot{d}_{j}}.
		\]
		Let us define the  minimum/maximum eigenvalues of matrix $X$ as $\lambda_{\min}(X)$, $\lambda_{\max}(X)$.
		Since $Q_{e1}$ can be indefinite and $\norm{B_e} = 1$, there exists $\varepsilon>0$ such that
		\begin{equation}
			{\small
				\begin{split}
					&\dot{V}_j (\mathbf{d}_j)
					\le\\ &-{\lambda_{\min}(Q_{e0})}\norm{\mathbf{d}_j}^2
					+      \abs{ {\partial \mu}(\tilde{e}_j) } 	\norm{Q_{e1}} \norm{\mathbf{d}_j}^2
					+ 2\norm{P_j}\norm{\mathbf{d}_j}   {\delta}_j\\
					&=\\
					&-\left\{\frac{\lambda_{\min}(Q_{e0})} {\lambda_{\max}(P_j)} -    	\abs{ {\partial \mu}(\tilde{e}_j) }     \frac{\norm{Q_{e1}}} {\lambda_{\min}(P_j)}    \right\}V_j(\mathbf{d}_j)  + 2\norm{P_j}\norm{\mathbf{d}_j}    {\delta}_j\\
					& ~~~~~~~~~~~~~~~~~~~~~~~~+ 2 \norm{P_j}\norm{\mathbf{d}_j}^2 \varepsilon_{}^{-1}     	{\delta}_j - 2\norm{P_j} \norm{\mathbf{d}_j}^2\varepsilon_{}^{-1}      	{\delta}_j  \\
					&  \leq - \gamma_e(  \tilde{e}_j   ) {V}_b(\mathbf{d}_j) ,~~~~~~~~~~~~~~~~\forall \norm{\mathbf{d}_j} \ge \varepsilon\\
				\end{split}
				\label{eq:Vdot_chi2}  }	
		\end{equation}
		where,
		\begin{equation}
			\small
			\begin{split}
				&\gamma_e   (\cdot ) =          \frac{\lambda_{\min}(Q_{e0})} {\lambda_{\max}(P_j)} 
				-    	\abs{ \partial\mu(\tilde{e}_j) }     \frac{\norm{Q_{e1}}} {\lambda_{\min}(P_j)}
				- {2 \frac{\norm{P_j}\varepsilon^{-1}} {\lambda_{\min}(P_j)}     }   {\delta}_j.
			\end{split}
			\label{eq:gamma}
		\end{equation}
		%
%
Although $\gamma_e(  \tilde{e}_j  )$ varies  due to the $\abs{ \partial\mu(\tilde{e}_j) } $, it is straightforward that   $\abs{ \partial\mu(\tilde{e}_j) }   $    has upper bounded values from~(\ref{eq:mu_e}) and~Fig.~\ref{fig:partial_Mu} such as
\begin{equation*}
	\mu^{\max}_{\tilde{e}}=\sup_{\tilde{e}_j \in D_{\tilde{e}}  }{  \abs{ \partial\mu(\tilde{e}_j) }    } = l_{p,e}    
\end{equation*} 
		where $D_{\tilde{e}}=\{\tilde{e}_j : \abs{\tilde{e}_j} \le \tilde{e}^{\max} \} $ and $\tilde{e}_{\max}>0$.
Therefore, given  $\varepsilon >0 $, $Q_{e0}$,  $l_{p,e}$ and $l_{i,e}$  satisfying $\sigma(A_{e0}) \subset {\mathbb{C}}^o_{-}$,    there exists a constant $\gamma_{e}^*  $ such that $\dot{V}_j({\mathbf{d}_j})\le  - \gamma_{e}^*   {V}_j(\mathbf{d}_j), \forall \norm{\mathbf{d}_j} \ge \varepsilon$
such as
\[\gamma_{e}^{*}  = \inf_{\tilde{e}_j \in D_{\tilde{e}}}{  \gamma_e(\tilde{e}_j)} \]
%
 %
 %
We need to show the existence of tuple  $(l_{p,e}, l_{i,e})$ ensuring $\gamma_e^*>0$, and  a numerical example will be presented in Sec.~\ref{sec:section4}.
Next, we will show that $\mathbf{d}_j$ converges to the bounded ball, $B_\varepsilon = \{ \mathbf{d}_j|\norm {\mathbf{d}_j}  < \varepsilon \}$ within finite time $T_f$.
With the positive infimum value $\gamma_e^*$, using the Gronwall-Bellman Inequality~\cite{khalil2002nonlinear}, we can get the inequality of $V_{\mathbf{d}}(T_f)$ as
\begin{equation}
 	\begin{split}
 		&V_\mathbf{d}(T_f) =\norm{\mathbf{d}_j(T_f)}^2 = \varepsilon^2 \le V_\mathbf{d}(t_0) e^{-\gamma_e ^* (T_f-T_0)}\\
 	\end{split}
 	\label{eq:Tf_deriv_1}
\end{equation}
where
\[
T_f \le T_0+ \frac{ \log_{e }{V_\mathbf{d}(T_0)}-2\log_{e }{ \varepsilon_{}   }   } {\gamma_e^*}
\] 
Thus, each $\mathbf{d}_j$ gets into $B_\varepsilon$  within finite time $T_f$.
\end{proof}

	\subsection{Lyapunov-based Nonlinear Currents Control }
	In this subsection, we present a nonlinear currents control law combined with N-PI-DOB for inner-loop control system. To make currents tracking errors, $\mathbf{e}_{\alpha\beta}$ in (\ref{eq:Errors}), stay within the bounded ball, let us define Lyapunov function candidate for the currents tracking as follow:
	\begin{equation}
		\begin{split}
			\label{eq:V_eab}
		{ V (\mathbf{e}_{\alpha\beta}) = \mathbf{e}_{\alpha\beta}^T \mathbf{e}_{\alpha\beta}} .
		\end{split}
	\end{equation} 
	Then, we can design a nonlinear inputs $\mathbf{u}_{\alpha\beta}$ such as
	\begin{equation}
		\begin{split}
			\mathbf{u}_{\alpha\beta} &=  \begin{bmatrix}
				   -\eta_1 \frac{e_\alpha}{V_{\alpha\beta}+\eta_2} + \hat{d}_\alpha&
				 -\eta_1 \frac{e_\beta}{V_{\alpha\beta}+\eta_2 }+   \hat{d}_\beta
			\end{bmatrix}^T
		\end{split}
		\label{eq:u_lyap}
	\end{equation}
where $\eta_1$ and $\eta_2$ are control gains and positive.
	{

	\begin{theorem}{\bf(Uniform Convergence of Currents Tracking Errors in Finite Time , $T_f$):}
		\label{th:th2}
		Suppose that the tracking error dynamics of SPMSM is given by~(\ref{eq:ded_final}).
		With $\eta_1>0$ and $\eta_2 > 0$, $ {V} ({\mathbf{e}_{\alpha\beta}})$, and $\mathbf{u}_{\alpha\beta}$,
		the currents tracking errors, $\mathbf{e}_{\alpha\beta}$, converge into  bounded ball.
	\end{theorem}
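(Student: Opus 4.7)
The plan is to differentiate the scalar Lyapunov function $V(\mathbf{e}_{\alpha\beta})=\mathbf{e}_{\alpha\beta}^T\mathbf{e}_{\alpha\beta}$ along the third block of~(\ref{eq:ded_final}), insert the control law~(\ref{eq:u_lyap}), and then invoke Theorem~\ref{th:th_currentDOB} to dominate the residual disturbance term. First I would compute
\[
\dot V \;=\; 2\mathbf{e}_{\alpha\beta}^T\dot{\mathbf{e}}_{\alpha\beta}\;=\;-\frac{2R}{L}V+\frac{2}{L}\mathbf{e}_{\alpha\beta}^T\bigl(\mathbf{u}_{\alpha\beta}-\mathbf{d}_{\alpha\beta}\bigr),
\]
using the fact that $A_{\alpha\beta}=-(R/L)I$ and $B_{\alpha\beta}=(1/L)I$ from~(\ref{eq:ded_final}).

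Next I would write $\hat d_j=d_j-\tilde d_j$ and collect terms, so that~(\ref{eq:u_lyap}) gives $\mathbf{u}_{\alpha\beta}-\mathbf{d}_{\alpha\beta}=-\eta_1\,\mathbf{e}_{\alpha\beta}/(V+\eta_2)-\tilde{\mathbf{d}}_{\alpha\beta}$ with $\tilde{\mathbf{d}}_{\alpha\beta}:=(\tilde d_\alpha,\tilde d_\beta)^T$. Plugging this in yields the clean decomposition
\[
\dot V\;=\;-\frac{2R}{L}V-\frac{2\eta_1}{L}\frac{V}{V+\eta_2}-\frac{2}{L}\mathbf{e}_{\alpha\beta}^T\tilde{\mathbf{d}}_{\alpha\beta},
\]
in which the first two contributions are negative semidefinite and the saturation factor is well defined because $\eta_2>0$ keeps $V+\eta_2$ uniformly bounded away from zero for all $V\ge 0$.

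The third step is to dominate the remaining cross term. Cauchy--Schwarz gives $\abs{\mathbf{e}_{\alpha\beta}^T\tilde{\mathbf{d}}_{\alpha\beta}}\le \sqrt{V}\,\norm{\tilde{\mathbf{d}}_{\alpha\beta}}$, and applying Theorem~\ref{th:th_currentDOB} to both $j=\alpha,\beta$ places each $\mathbf{d}_j$ inside $B_\varepsilon$ for $t\ge T_f$, from which $\norm{\tilde{\mathbf{d}}_{\alpha\beta}}\le \sqrt{2}\,\varepsilon$. Therefore, for $t\ge T_f$,
\[
\dot V\;\le\;-\frac{2R}{L}V+\frac{2\sqrt{2}\,\varepsilon}{L}\sqrt{V},
\]
which is strictly negative whenever $\sqrt{V}>\sqrt{2}\,\varepsilon/R$. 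A standard ultimate-boundedness argument (e.g.\ Lemma~4.6 of~\cite{khalil2002nonlinear}, already cited in the paper) then implies that $\mathbf{e}_{\alpha\beta}$ enters and remains in the bounded ball $\{\mathbf{e}_{\alpha\beta}:\norm{\mathbf{e}_{\alpha\beta}}\le \sqrt{2}\,\varepsilon/R\}$.

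The main obstacle I expect is purely a bookkeeping issue on the transient interval $[0,T_f)$, where the inner-loop N-PI-DOB has not yet settled and the bound $\norm{\tilde{\mathbf{d}}_{\alpha\beta}}\le\sqrt{2}\,\varepsilon$ is unavailable. On this interval, however, $\hat d_j$ from~(\ref{eq:hat_d_ab}) together with the standing boundedness hypothesis on $\dot d_j$ keeps $d_j-\hat d_j$ finite; the saturated feedback obeys $\eta_1\norm{\mathbf{e}_{\alpha\beta}}/(V+\eta_2)\le \eta_1/\eta_2$; and the dissipation $-(2R/L)V$ already rules out finite-time escape of $V$. Splicing the resulting transient bound to the Lyapunov estimate above at $t=T_f$ then completes the argument and delivers uniform convergence of $\mathbf{e}_{\alpha\beta}$ to the stated bounded ball.
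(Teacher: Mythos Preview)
Your proposal is correct and follows the same overall Lyapunov scaffold as the paper (differentiate $V(\mathbf{e}_{\alpha\beta})$, substitute~(\ref{eq:u_lyap}), then call on Theorem~\ref{th:th_currentDOB} to bound $\tilde d_\alpha,\tilde d_\beta$), but the dominating mechanism you exploit is different. The paper discards the natural dissipation $-\tfrac{2R}{L}V$ and instead lower-bounds the saturated feedback $-\tfrac{2\eta_1}{L}\,\tfrac{V}{V+\eta_2}$ by $-\tfrac{2\eta_1}{L}\rho(\eta_2)$ on the set $\{V\ge\varepsilon_{\alpha\beta}\}$, pairing this against a supremum $\delta_{\alpha\beta}$ of the cross term taken over a compact region $D_e$. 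You do the reverse: you keep $-\tfrac{2R}{L}V$, treat the nonlinear feedback as merely nonpositive, and use Cauchy--Schwarz plus $\norm{\tilde{\mathbf d}_{\alpha\beta}}\le\sqrt{2}\,\varepsilon$ to close the estimate. Your route yields an explicit ultimate-ball radius $\sqrt{2}\,\varepsilon/R$ and sidesteps the paper's mild circularity of having to posit $\mathbf e_{\alpha\beta}\in D_e$ in order to define $\delta_{\alpha\beta}$; it also handles the transient window $[0,T_f)$ more carefully than the paper does. The trade-off is that your bound does not display the effect of the design gain $\eta_1$, whereas the paper's version makes clear that enlarging $\eta_1$ (relative to $\delta_{\alpha\beta}$) tightens the invariant ball --- which is, after all, the point of the added feedback term.
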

	
	\begin{proof}
		\label{prov:prov1}
		The derivative of Lyapunov function candidate with respect to time, $t$, becomes
	\begin{align}
		\begin{split}
			\label{eq:sgm_dot_new2}
			\frac{d}{dt}  {V} ({\mathbf{e}_{\alpha\beta}})&=
			{\mathbf{e}_{\alpha\beta}}^T{\dot{\mathbf{e}}_{\alpha\beta}}+{\dot{\mathbf{e}}_{\alpha\beta}}^T{\mathbf{e}_{\alpha\beta}}\\
			&=
		    2{\mathbf{e}_{\alpha\beta}}^TA_{\alpha\beta} 	{\mathbf{e}}_{\alpha\beta} + 2{\mathbf{e}_{\alpha\beta}}^TB_{\alpha\beta} ({\mathbf{u}}_{\alpha\beta}  -  {\mathbf{d}}_{\alpha\beta})      \\ 
		    &=
		    -2\frac{R}{L}{\mathbf{e}_{\alpha\beta}}^T{\mathbf{e}_{\alpha\beta}} - 
		    \frac{2}{L}{\mathbf{e}_{\alpha\beta}}^T ({\mathbf{u}}_{\alpha\beta}  	-  {\mathbf{d}}_{\alpha\beta}).\\    
		\end{split}
	\end{align}
		By plugging (\ref{eq:u_lyap}) into (\ref{eq:sgm_dot_new2}), then, we see that (\ref{eq:sgm_dot_new2}) becomes
		
		\begin{align}
			\begin{split}
				\label{eq:sgm_dot_new3}
				\frac{d}{dt}  {V} ({\mathbf{e}_{\alpha\beta}})  
				&\le -\frac{2\eta_1}{L} \frac{V(\mathbf{e}_{\alpha\beta})}{V(\mathbf{e}_{\alpha\beta})+\eta_2}
				+ \frac{2}{L} \abs{e_\alpha\tilde{d}_\alpha + e_\beta \tilde{d}_\beta} 
			\end{split}
		\end{align}
		From  {\bf{Theorem}~\ref{th:th_currentDOB}}, we prove that $\tilde{d}_\alpha, \tilde{d}_\beta $ converge to bounded ball within finite time $T_f$.	
		Therefore, given  $ {e}^{\max}>0   $,  there exists
		\begin{equation}
			{\delta}_{\alpha\beta} = \sup_{ \mathbf{e}_{\alpha\beta} \subset D_e  }{  \frac{2}{L} \abs{e_\alpha\tilde{d}_\alpha + e_\beta \tilde{d}_\beta}} 
		\end{equation}
		where $D_{{e}}=\{\mathbf{e}_{\alpha\beta} : \norm{\mathbf{e}_{\alpha\beta}} \le {e}^{\max} \} $.
	 Let $f(x) = \frac{x}{ x+\eta_2}$. As shown in Fig.~1 of \cite{jeong2020nonlinear},  given $\eta_2>0$ we see that  $\lim_ {x  \rightarrow \infty } {f(x)} = 1$. Thus   given $\eta_2>0$ and  for $x \geq \varepsilon_{\alpha\beta}$, we see that there exists the infimum of $f(x)$ for $x \geq \varepsilon_{\alpha\beta}$.
		With  $ \Xi = \{ x~ | ~f(x ) \geq \frac{\varepsilon_{\alpha\beta}}{\varepsilon_{\alpha\beta}+\eta_2}, ~~\forall ~ x \}$ , let $\rho (\eta_2) = \inf_{ x \in \Xi} \frac{x}{x + \eta_2  }$. As a result, we have
		\[
		\dot V(\mathbf{e}_{\alpha\beta}) \leq - \frac{{2\eta_1 }}{L} \rho(\eta_2)+ \delta_{\alpha\beta}.
		\]
		Therefore, it is clear that the control law~(\ref{eq:u_lyap}) ensures 
		currents tracking errors converge to bounded ball.
		
		%
		%
		
		%
	\end{proof}
	
	%
	
Until now, we have proved that $\mathbf{e}_{\alpha\beta}$ converge to bounded ball with Lyapunov-based currents controller and N-PI-DOB for inner-loop control system.
In next subsection, we will discuss the stability of outer-loop control system.

	\subsection{Stability of Outer-Loop Control System}
	\label{Sec:cascade_stability} 
	In this subsection, we analyse stability of the N-PI-DOB for outer-loop control system after the currents tracking errors converged to bounded ball.
	After then, we will show that the motion tracking system has Intput-to-State Stability~(ISS) properties~\cite{khalil2002nonlinear}.

	\begin{theorem}{\bf(Exponential Convergence of inner-loop Disturbance Estimation Error):}
		\label{th:th3}
		%


	Suppose  there exists $\delta_\tau$ in~(\ref{eq:delta_tau}). Given states of SPMSM, torque modulation and N-PI-DOB for outer-loop control system~(\ref{eq:tau_hat}), the estimation error dynamics of outer-loop disturbances are given by~(\ref{eq:dtau_dynamics}). Then, the estimation error of unknown load torque, $\mathbf{d}_\tau$,  starts to converge exponentially to bounded ball $B_{\varepsilon_\tau} $  once $\mathbf{e}_{\alpha\beta}$ converges to bounded ball.

	\end{theorem}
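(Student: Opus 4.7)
The plan is to mirror the structure of the proof of Theorem~\ref{th:th_currentDOB} applied to the outer-loop error dynamics~(\ref{eq:dtau_dynamics}), treating the current-tracking coupling $G_\tau(\tilde{\omega})\mathbf{e}_{\alpha\beta}$ as an additional bounded perturbation that becomes small once Theorem~\ref{th:th2} is invoked. I would begin by fixing a quadratic Lyapunov candidate $V_\tau(\mathbf{d}_\tau)=\mathbf{d}_\tau^{T}P_\tau\mathbf{d}_\tau$ with $P_\tau=P_\tau^{T}>0$. Since $A_{\tau 0}$ has the same block structure as $A_{e0}$, choosing $l_{i,\tau}>0$ ensures $\sigma(A_{\tau 0})\subset \mathbb{C}^o_{-}$, and the Lyapunov equation $A_{\tau 0}^{T}P_\tau+P_\tau A_{\tau 0}=-Q_{\tau 0}$ is solvable for any prescribed $Q_{\tau 0}=Q_{\tau 0}^{T}>0$. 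Setting $Q_{\tau 1}:=A_{\tau 1}^{T}P_\tau+P_\tau A_{\tau 1}$, the derivative of $V_\tau$ along~(\ref{eq:dtau_dynamics}) splits into the nominal dissipation term, the $\partial\mu_\tau(\tilde{\omega})$-weighted indefinite term, the current-coupling term $2\mathbf{d}_\tau^{T}P_\tau G_\tau(\tilde{\omega})\mathbf{e}_{\alpha\beta}$, and the load-derivative term $2\mathbf{d}_\tau^{T}P_\tau B_\tau\dot{\tau}_L$.

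Next, I would bound each perturbation. Since $|S|,|C|\le 1$ and, by Fig.~\ref{fig:partial_Mu}, $|\partial\mu_\tau(\tilde{\omega})|\le l_{p,\tau}$, the induced norm of $G_\tau(\tilde{\omega})$ is uniformly bounded by some $g_\tau^{\max}:=\tfrac{k_m}{J}\sqrt{2(1+l_{p,\tau}^{2})}$. By Theorem~\ref{th:th2}, there exist $T_f$ and a constant $e_{\alpha\beta}^{\max}$ such that $\|\mathbf{e}_{\alpha\beta}(t)\|\le e_{\alpha\beta}^{\max}$ for all $t\ge T_f$, so the coupling term is dominated by $2\|P_\tau\|g_\tau^{\max}e_{\alpha\beta}^{\max}\|\mathbf{d}_\tau\|$. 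Combined with $\|\dot{\tau}_L\|\le\delta_\tau$ from~(\ref{eq:delta_tau}), the same Young-type splitting used in~(\ref{eq:Vdot_chi2}) would let me introduce an $\varepsilon_\tau>0$ and absorb the linear-in-$\|\mathbf{d}_\tau\|$ disturbances into a rate multiplying $V_\tau$ valid outside $B_{\varepsilon_\tau}=\{\mathbf{d}_\tau:\|\mathbf{d}_\tau\|<\varepsilon_\tau\}$.

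I would then define, in analogy with~(\ref{eq:gamma}),
\begin{equation*}
\gamma_\tau(\tilde{\omega})=\frac{\lambda_{\min}(Q_{\tau 0})}{\lambda_{\max}(P_\tau)}-|\partial\mu_\tau(\tilde{\omega})|\frac{\|Q_{\tau 1}\|}{\lambda_{\min}(P_\tau)}-\frac{2\|P_\tau\|\varepsilon_\tau^{-1}}{\lambda_{\min}(P_\tau)}\bigl(\delta_\tau+g_\tau^{\max}e_{\alpha\beta}^{\max}\bigr),
\end{equation*}
and, using $|\partial\mu_\tau(\tilde{\omega})|\le l_{p,\tau}$ on the relevant domain, take $\gamma_\tau^{*}=\inf_{\tilde{\omega}}\gamma_\tau(\tilde{\omega})$. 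For sufficiently large $\lambda_{\min}(Q_{\tau 0})$ (equivalently, for $l_{i,\tau}$ chosen together with $l_{p,\tau}$ so that the nominal dissipation dominates), $\gamma_\tau^{*}>0$ and
$\dot V_\tau(\mathbf{d}_\tau)\le -\gamma_\tau^{*}V_\tau(\mathbf{d}_\tau)$ for all $\|\mathbf{d}_\tau\|\ge\varepsilon_\tau$, so a Gronwall--Bellman argument identical to~(\ref{eq:Tf_deriv_1}) gives exponential convergence of $\mathbf{d}_\tau$ into $B_{\varepsilon_\tau}$ after time $T_f$, which is what the theorem claims.

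The main obstacle will be the fact that the perturbation $G_\tau(\tilde{\omega})\mathbf{e}_{\alpha\beta}$ depends on a component of the very state $\mathbf{d}_\tau$ being analysed, so a naive bound could create circular dependence on $\|\mathbf{d}_\tau\|$. The resolution is to exploit the uniform bound $|\partial\mu_\tau|\le l_{p,\tau}$ guaranteed by the N-PI structure: this decouples the induced norm of $G_\tau$ from the size of $\tilde{\omega}$, letting the coupling enter only as a constant disturbance once $\mathbf{e}_{\alpha\beta}$ is confined. The only genuine design restriction is ensuring the gain tuple $(l_{p,\tau},l_{i,\tau})$ makes $\gamma_\tau^{*}>0$, which mirrors the existence discussion following~(\ref{eq:gamma}) and would again be illustrated by a numerical check in Sec.~\ref{sec:section4}.
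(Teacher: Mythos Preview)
Your proposal is correct and follows the same quadratic-Lyapunov/Gronwall--Bellman template as the paper's own proof of Theorem~\ref{th:th3}: the same candidate $V(\mathbf{d}_\tau)=\mathbf{d}_\tau^T P_\tau\mathbf{d}_\tau$, the same Lyapunov equation for $A_{\tau0}$, the same decomposition $Q_{\tau1}=A_{\tau1}^T P_\tau+P_\tau A_{\tau1}$, the same $\varepsilon_\tau$-splitting, and the same infimum rate $\gamma_\tau^{*}$ feeding into~(\ref{eq:Tf_deriv_1}). The one substantive difference is that you explicitly retain and bound the current-coupling term $2\mathbf{d}_\tau^T P_\tau G_\tau(\tilde{\omega})\mathbf{e}_{\alpha\beta}$ and fold it into the rate via the extra contribution $g_\tau^{\max}e_{\alpha\beta}^{\max}$, whereas the paper simply omits this term from $\dot V(\mathbf{d}_\tau)$ in~(\ref{eq:Vdot_tau1}), so that its $\gamma_\tau$ in~(\ref{eq:gamma_tau}) depends only on $\delta_\tau$. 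In that respect your argument is actually the more careful of the two, since the dynamics~(\ref{eq:dtau_dynamics}) do contain $G_\tau(\tilde{\omega})\mathbf{e}_{\alpha\beta}$; the paper's version is recovered from yours by setting $e_{\alpha\beta}^{\max}=0$, i.e.\ tacitly idealising the hypothesis ``once $\mathbf{e}_{\alpha\beta}$ converges to a bounded ball'' to mean the coupling is negligible.
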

	%
	%

	\begin{proof}
		Let us define Lyapunov function candidate $V(\mathbf{d}_{\tau})$ as
		\begin{equation}
			\begin{split}
				\label{eq:V_tau}	
				V(\mathbf{d}_{\tau})& =    \mathbf{d}_{\tau}^T P_{\tau} \mathbf{d}_{\tau} 
			\end{split}
		\end{equation}
		where $P_{\tau}=P_{\tau}^T >0$.
		The time derivative of $V$ becomes
		\begin{equation}
			\begin{split}
				&\dot{V}(\mathbf{d}_{\tau}) =  \mathbf{d}_{\tau}^T( A_{\tau0}^T P_{\tau}   +  P_{\tau} A_{\tau0} ) \mathbf{d}_{\tau} \\ &~~~~~~~ +    {\partial \mu}(\tilde{\omega}) \mathbf{d}_{\tau}^T(  A_{\tau1}^T P_{\tau} + P_{\tau} A_{\tau1}  )\mathbf{d}_{\tau}+2 \mathbf{d}_{\tau}^T  P_{\tau}B_\tau {\dot{ {\tau}}_L}.\\
			\end{split}
			\label{eq:Vdot_tau1}
		\end{equation}
		Since we can select $l_{i,\tau}>0$ satisfying $\sigma(A_{\tau0}) \subset {\mathbb{C}}^o_{-}$, for any $Q_{\tau0}=Q_{\tau0}^T>0$, there exists $P_{\tau}=P_{\tau}^T >0$ such that~\cite{khalil2002nonlinear}
		\begin{equation}
			\begin{split}
				A_{\tau0}^T P_\tau +  P_\tau A_{\tau0} = - Q_{\tau0} .
			\end{split}
		\end{equation} 
\begin{figure}[!t]
\centering
{\includegraphics[width=0.370\textwidth]{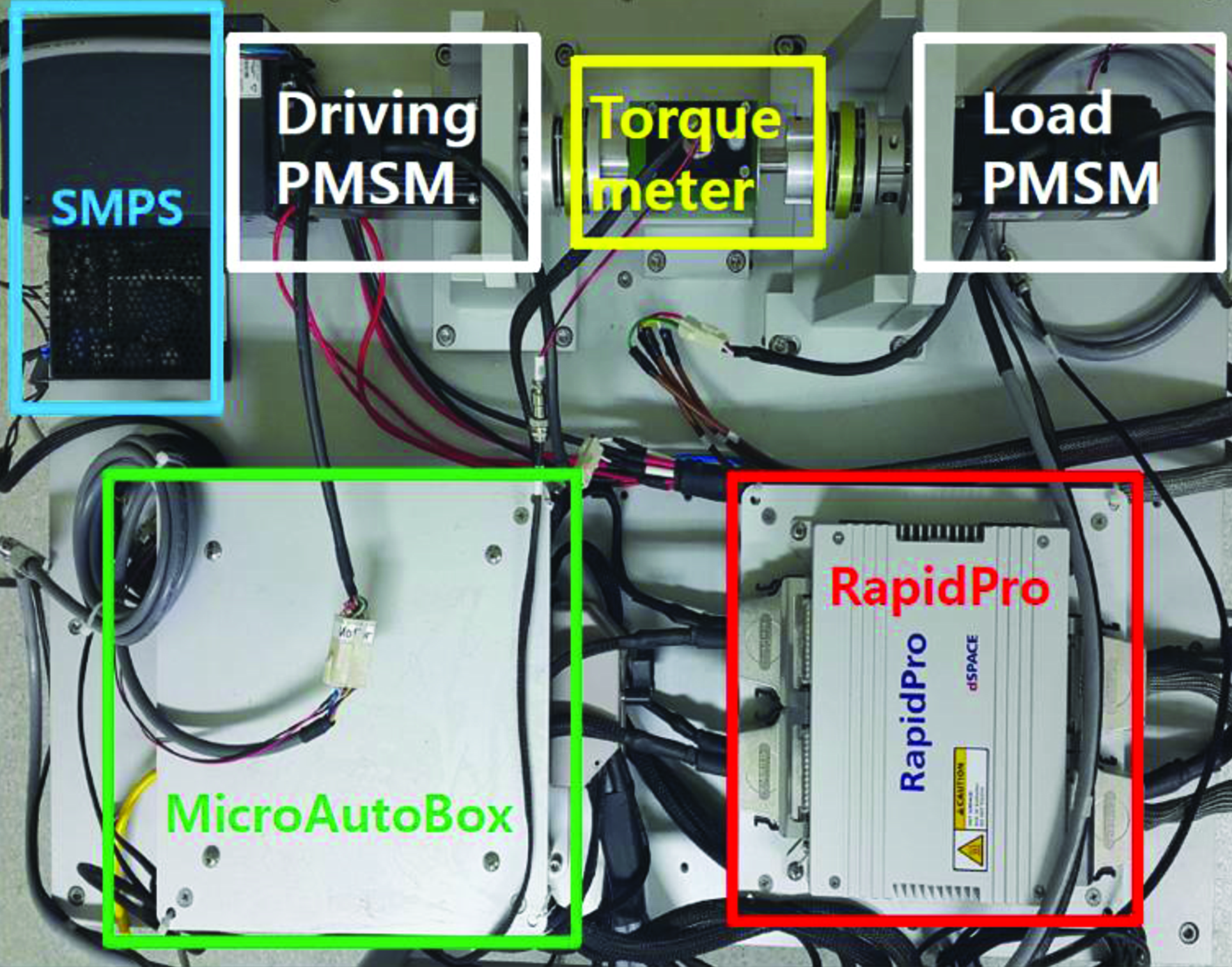}}
\caption{  Experimental setup: One PMSM (APM-SB03ADK-9, LS Mecapion $\&$ Co) is located between an encoder (2500 pulses per revolution) and a coupler. The torque meter is connected between couplers. Another PMSM locates in series to generate load torque .
	%
}
\label{fig:Exp_Setup}
\end{figure}
		In addition, let us define $Q_{\tau1}:=A_{\tau1}^TP_{\tau}   +  P_{\tau}A_{\tau1}$. Here, we use the two norm of vector $x$ as $\norm{x}$ and the induced matrix norm of matrix $X$ as $\norm{X}$.
		Then, we see that (\ref{eq:Vdot_chi1}) becomes
		\[
		\dot{V}(\mathbf{d}_{\tau})
		= -\mathbf{d}_{\tau}^T Q_{\tau0} \mathbf{d}_{\tau} + {\partial \mu}(\tilde{\omega})   \mathbf{d}_{\tau}^T Q_{\tau1}    \mathbf{d}_{\tau} +2 \mathbf{d}_{\tau}^T  P_{\tau} B_\tau {\dot{ {\tau}}_L}.
		\]
		Let us define the  minimum/maximum eigenvalues of matrix $X$ as $\lambda_{\min}(X)$, $\lambda_{\max}(X)$.
		Since $Q_{\tau1}$ can be indefinite and $\norm{B_\tau} = 1$, there exists $\varepsilon_\tau>0$ such that
		\begin{equation}
			{\small
				\begin{split}
					&\dot{V}  (\mathbf{d}_{\tau})
					\le\\ &-{\lambda_{\min}(Q_{\tau0})}\norm{\mathbf{d}_{\tau}}^2
					+      \abs{ {\partial \mu}(\tilde{\omega}) } 	\norm{Q_{\tau1}} \norm{\mathbf{d}_{\tau}}^2
					+ 2\norm{P_{\tau}}\norm{\mathbf{d}_{\tau}}   {\delta}_\tau\\
					&=\\
					&-\left\{\frac{\lambda_{\min}(Q_{\tau0})} {\lambda_{\max}(P_{\tau})} -    	\abs{ {\partial \mu}(\tilde{\omega}) }     \frac{\norm{Q_{\tau1}}} {\lambda_{\min}(P_{\tau})}    \right\}V(\mathbf{d}_{\tau})  + 2\norm{P_{\tau}}\norm{\mathbf{d}_{\tau}}    {\delta}_\tau\\
					& ~~~~~~~~~~~~~~~~~~~~~~~~+ 2 \norm{P_{\tau}}\norm{\mathbf{d}_{\tau}}^2 \varepsilon_\tau^{-1}     	{\delta}_\tau - 2\norm{P_{\tau}} \norm{\mathbf{d}_{\tau}}^2\varepsilon_\tau^{-1}      	{\delta}_\tau  \\
					&  \leq - \gamma_\tau(  \tilde{\omega}   ) {V} (\mathbf{d}_{\tau}) ,~~~~~~~~~~~~~~~~\forall \norm{\mathbf{d}_{\tau}} \ge \varepsilon_\tau\\
				\end{split}
				\label{eq:Vdot_tau2}  }	
		\end{equation}
		where,
		\begin{equation}
			\small
			\begin{split}
				&\gamma_\tau   (\cdot ) =          \frac{\lambda_{\min}(Q_{\tau0})} {\lambda_{\max}(P_{\tau})} 
				-    	\abs{ \partial\mu(\tilde{\omega}) }     \frac{\norm{Q_{\tau1}}} {\lambda_{\min}(P_{\tau})}
				- {2 \frac{\norm{P_{\tau}}\varepsilon_\tau^{-1}} {\lambda_{\min}(P_{\tau})}     }   {\delta}_\tau.
			\end{split}
			\label{eq:gamma_tau}
		\end{equation}
		%
		%
		It is straightforward that   $\abs{ \partial\mu(\tilde{\omega}) }   $    has upper bounded values such as
		\begin{equation*}
			\mu^{\max}_{\tilde{\omega}}=\sup_{\tilde{\omega} \in D_{\tilde{\omega}}  }{  \abs{ \partial\mu(\tilde{\omega}) }    } = l_{p,\tau}    
		\end{equation*} 
		where $D_{\tilde{\omega}}=\{\tilde{\omega} : \abs{\tilde{\omega}} \le \tilde{\omega}^{\max} \} $ and $\tilde{\omega}^{\max}>0$.
		Therefore, given  $\varepsilon_\tau >0 $, $Q_{\tau0}$,  $l_{p,\tau}$ and $l_{i,\tau}$  satisfying $\sigma(A_{\tau0}) \subset {\mathbb{C}}^o_{-}$,    there exists a constant $\gamma_{\tau}^*  $ such that $\dot{V} ({\mathbf{d}_\tau})\le  - \gamma_{\tau}^*   {V} ({\mathbf{d}_\tau}), \forall \norm{{\mathbf{d}_\tau}} \ge \varepsilon_\tau$
		such as
		\[\gamma_{\tau}^{*}  = \inf_{\tilde{\omega} \in D_{\tilde{\omega}}}{  \gamma_\tau(\tilde{\omega})} \]
		%
		%
		%
		We need to show the existence of tuple  $(l_{p,\tau}, l_{i,\tau})$ ensuring $\gamma_\tau^*>0$, and  a numerical example will be presented in Sec.~\ref{sec:section4}.
		Next, we will show that $\mathbf{d}_\tau$ converges to the bounded ball, $B_{\varepsilon_\tau} = \{ \mathbf{d}_\tau|\norm {\mathbf{d}_\tau}  < \varepsilon_\tau \}$ within finite time $T_f$.
		With the positive infimum value $\gamma_{\tau}^*$, using the Gronwall-Bellman Inequality~\cite{khalil2002nonlinear}, we can get the inequality of $V_{\mathbf{d}}(T_f)$ as
		\begin{equation}
			\begin{split}
				&V(T_f) =\norm{\mathbf{d}_\tau(T_f)}^2 = \varepsilon_\tau^2 \le V(t_0) e^{-\gamma_\tau^* (T_f-T_0)}\\
			\end{split}
			\label{eq:Tf_deriv_1}
		\end{equation}
		where
		\[
		T_f \le T_0+ \frac{ \log_{e }{V(T_0)}-2\log_{e }{ \varepsilon_\tau    }   } {\gamma_{\tau}^*}
		\] 
		Thus, each $\mathbf{d}_\tau$ gets into $B_{\varepsilon_\tau}$  within finite time $T_f$.

	\end{proof}

\begin{figure} [!t]
	\raggedleft
	\subfigure[Angular Velocity Reference]{
		\includegraphics[width=0.975\linewidth]{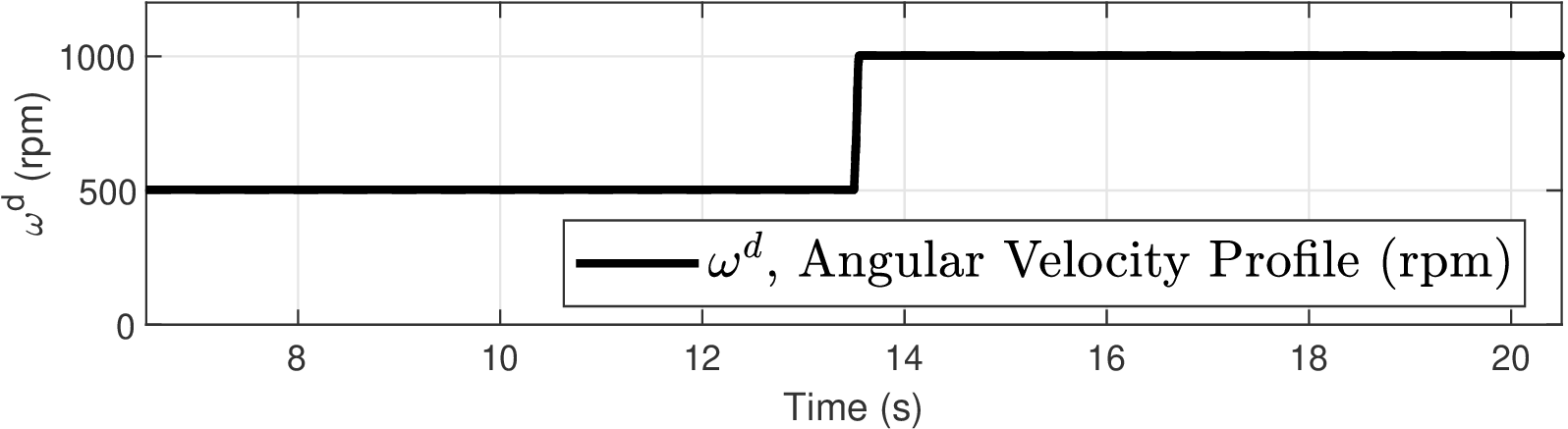}
		\label{fig:vel_ref}
	}
	\subfigure[Measured Load Torque]{
		\includegraphics[width=0.975\linewidth]{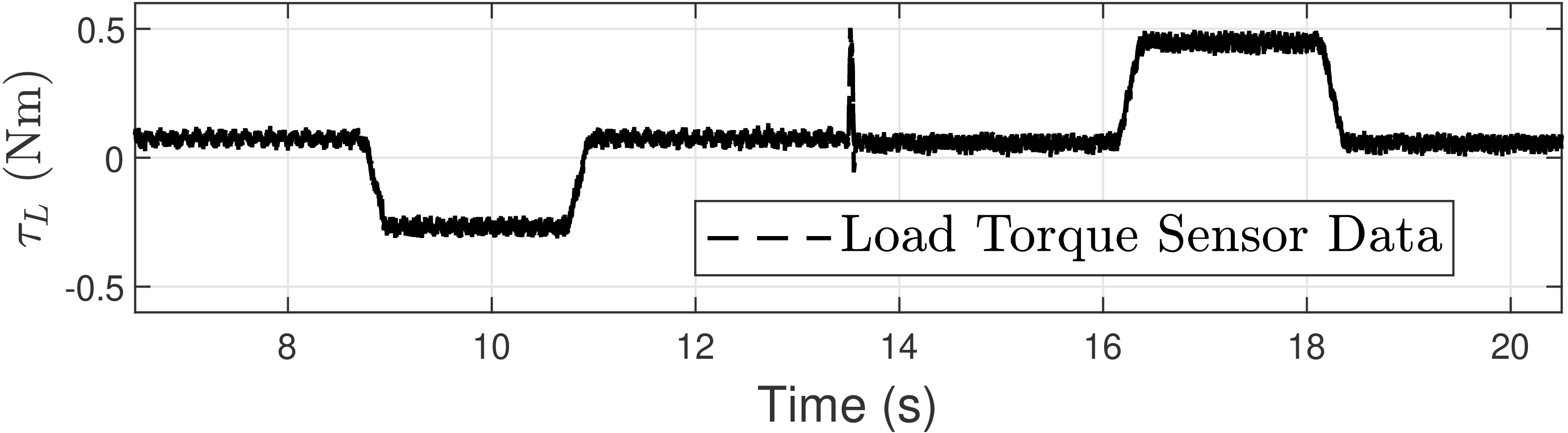}
		\label{fig:Tload_ref}
	}
	\caption{  Angular velocity reference and measured load torque: (a) angular velocity reference and (b)  measured load torque data. Another PMSM connected in series intentionally generated load torque at 8.5 (s) and 16.5 (s), respectively.
	}
	\label{fig:Vel_Tload_reference}
\end{figure}

\begin{table}[!b]
	\renewcommand{\arraystretch}{1.3}
	\caption{Parameters and Control/Estimation Gain}
	\centering
	\footnotesize
	\label{tb:table_1}
	\resizebox{\columnwidth}{!}{
		\begin{tabular}{l l l l l l  }
			\hline\hline \\[-3mm]
			\multicolumn{1}{c}{Symbol}   & \multicolumn{1}{c}{Values}    & \multicolumn{1}{c}{Unit}   &\multicolumn{1}{c}{Symbol}   & \multicolumn{1}{c}{Values}     & \multicolumn{1}{c}{Unit}   \\[1.6ex] \hline
			$J$				 &$4.46\times10^{-4}$& $({\rm kg\cdot m^2})$   &
			${L}$		 &$0.275$ &$\rm (mH)$ \\
			$B$				 &$7\times 10^{-4}$ &$\rm (N\cdot m \cdot s/rad) $ &
			${R}$	 	 &$0.875$&$(\Omega)$\\
			${\Phi}$ 	 & $1.58 \times 10^{-2}$ &$(N\cdot m \cdot A)$ &
			$P$				 &$4$&-\\
			$f_{ctrl}$		 &$100$ &$(us)$  &
			$f_{PWM}$  		 &$20$	&$(kHz)$\\
			%
			%
			$k_\theta$ 	 		 & $0.4$& -&
			$k_\omega$ 		 &$0.0214$ &-\\
			$l_{p,\tau}$ 			 &$0.001$ & - &
				$l_{p,e}$  	  & $0.0025$&-\\
			$l_{i,\tau}$	           &$10\times10^4$  &-&
			$l_{i,e}$	  &$6\times10^4$ &- \\
			$\tilde{\omega}^{\max}$  & $52.36$    &-&
			$\tilde{e}^{\max}$	  &$15.6$&-\\
			$\eta_1$	 	 &$2000$&-&
			$\eta_2$	  	 &$1300$ &- \\
			%
			%
			%
			%
			%
			\hline\hline
		\end{tabular}
	}
\end{table}

	%
	%
	In  summary, from Theorem~\ref{th:th2}, the currents tracking errors converge to bounded ball. Further, from Theorem~\ref{th:th3}, the load torque estimation error is bounded within $B_{\varepsilon_\tau}$. 
	Since $\sigma\left(A_\eta\right)\subset \mathbb{C}^{o}_{-}$, the dynamics of $\mathbf{e}_{m}$ is an exponentially stable system perturbed by $\mathbf{d}_\tau$ and $\mathbf{e}_{\alpha\beta}$. Again, from the Input-to-State Stability~(ISS) properties~\cite{khalil2002nonlinear} of a tracking error dynamics (\ref{eq:ded_final}), it is obvious that the motion tracking error converges into a small bounded ball.

	\section{Experimental Results}
	\label{sec:section4}

 \begin{figure}[!t]
 	\subfigure[ $\hat{d}_\alpha,\hat{d}_\beta$ ~(Zoom-in, $\omega$ : 500rpm, $\tau_L$ : -0.3 to 0 Nm)]{
 		\includegraphics[width=0.975\linewidth]{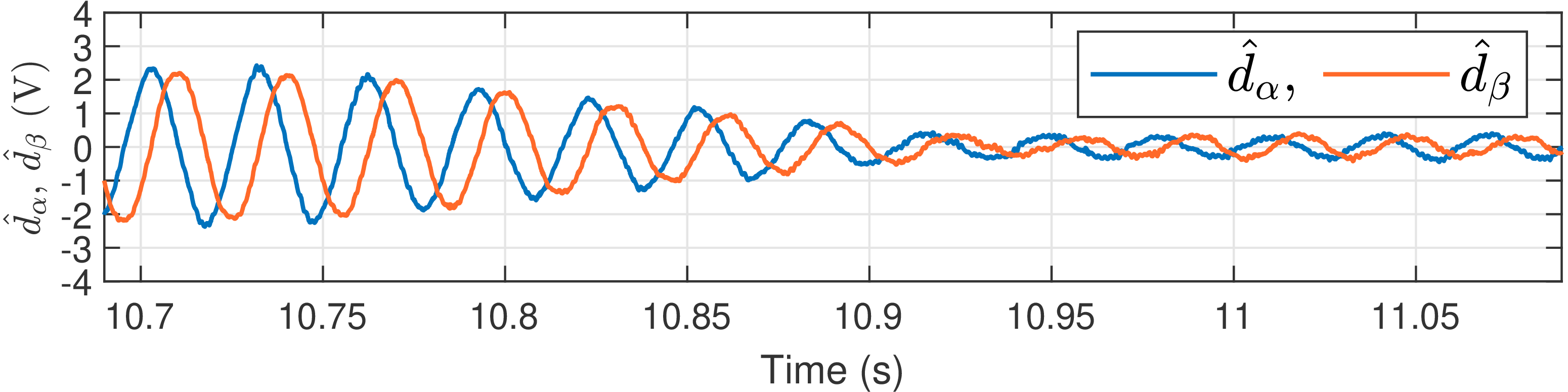}\label{fig:dab_zoom1}}
 	\subfigure[ $\hat{d}_\alpha,\hat{d}_\beta$ ~(Zoom-in, $\omega$ : 1000 rpm, $\tau_L$ : 0 to 0.3 Nm)]{
 		\includegraphics[width=0.975\linewidth]{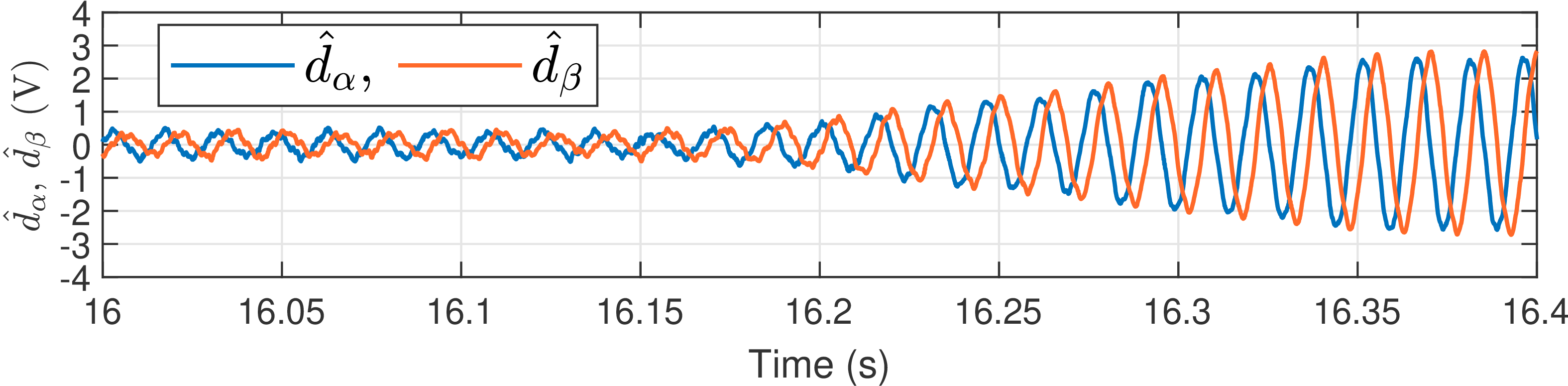}\label{fig:dab_zoom3}
 	}
 	\caption{ Estimated inner-loop disturbances $\hat{d}_\alpha,\hat{d}_\beta$: (a)  Case 1) around 10.9(s), and (b)  Case 2)  around 16.2(s).
 	}
 	\label{fig:dab}
 \end{figure}
 \begin{figure}[!t]
 	\raggedright
 	\subfigure[Tracking performances of $i_\alpha, i_\beta$, and torque reference~(Zoom-in, $\omega$:500rpm, $\tau_L$:-0.3 to 0Nm)]{
 		\includegraphics[width=0.975\linewidth]{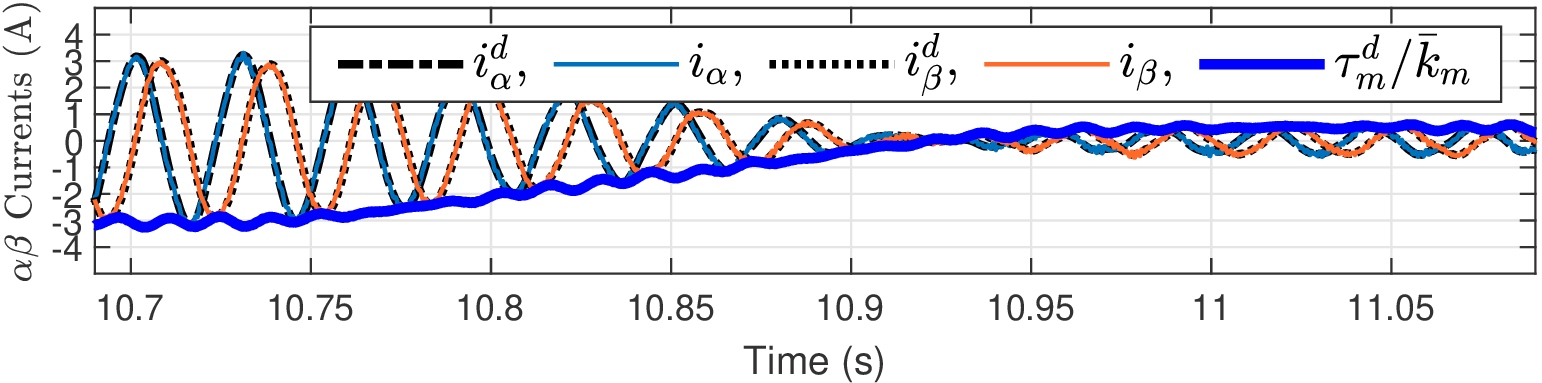}\label{fig:iab_zoom1}
 	}
 	\subfigure[Tracking performances  of $i_\alpha, i_\beta$, and torque reference~(Zoom-in, $\omega$:1000rpm, $\tau_L$:0 to 0.3Nm)]{
 		\includegraphics[width=0.975\linewidth]{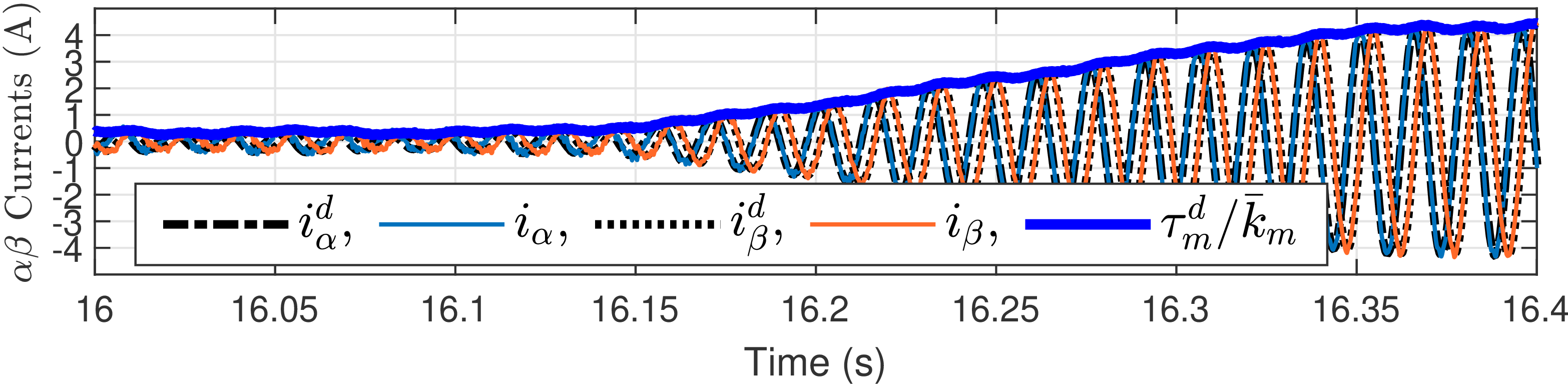}\label{fig:iab_zoom3}
 	}
 	\caption{%
 		$\alpha$/$\beta$-phase currents references and measured ones, and torque reference : (a)  Case 1) around 10.9(s), and (b)  Case 2)  around 16.2(s).
 	}
 	\label{fig:iab}
 \end{figure}

	This section describes experimental  results for validating the proposed outer/inner-loop N-PI-DOB and Lyapunov-based nonlinear currents controller for precision motion control of SPMSM as illustrated in~Fig.~\ref{fig:N_PI_DOB_Control}.
	Figure~\ref{fig:Exp_Setup}. shows the motor and generator set~(MG Set).
	%
	Control logics for the outer/inner-loop system were implemented in MicroAutobox embedded computer~(dSPACE). Furthermore, the 3-phase motor driver system was implemented by RapidPros~\cite{lee2017lpv}.
	Each RapidPro unit includes three half-bridge power stage modules, and the two switches of each half-bridge were driven by complementary signals with some dead time to avoid feedthrough fault.
	The sample rate of control logics,~$f_{ctrl}$, was 10kHz. The switching frequency of pulse width modulation,~$f_{pwm}$, was 20 kHz.
	The parameters of SPMSM and control/estimation gains are listed in Table.~\ref{tb:table_1}.
	Given PMSM motor parameters, and estimation gains and $\delta_{\tau}=\delta_{j}=1$, $\varepsilon =\varepsilon_\tau = 0.1$, $Q_{\tau0}=Q_{e0}=1000\times I_{2\times2}$,
	 we see that $\gamma_{\tau}^*  = 0.03>0$  and $\gamma_{e}^*  = 50.4>0$.
	%
	As shown in  Fig.~\ref{fig:Exp_Setup}, one SPMSM (APM-SB03ADK-9, LS Mecapion $\&$ Co) was located between an encoder (2500 pulses per revolution) and a coupler. Then, the torque meter is connected between couplers and another PMSM for generating load torque locate in series.
	Motor speed was obtained by  time stamped method using dSPACE AC Motor Control Solutions.
	%
	%
	Figure~\ref{fig:Vel_Tload_reference} shows the velocity reference and the measured load torque.
The desired velocity reference was changed from 500 rpm
to 1,000 rpm around 13.5 sec.	To validate the robustness for the outer/inner-loop N-PI-DOB performances, as illustrated in Fig.~\ref{fig:Tload_ref}, we intentionally injected the load torque.


\begin{figure}[t]
	\centering
	\subfigure[  $\norm{\mathbf{e}_{\alpha\beta}}_2$]{
		\includegraphics[width=0.93\linewidth]{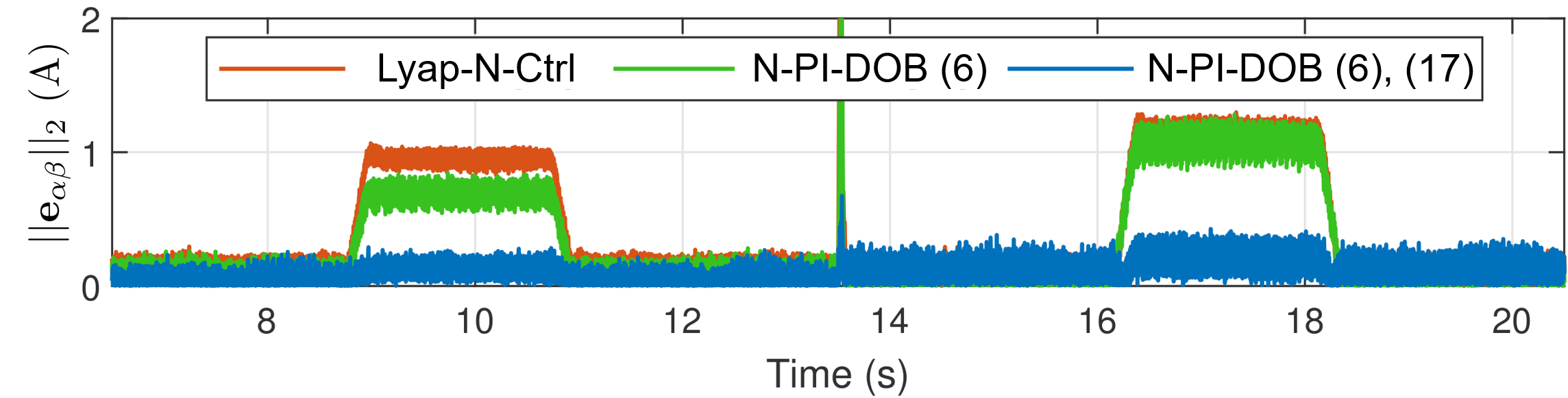}
		\label{fig:xi}
	}
	\subfigure[  $\tilde{\tau}_L$]{
		\includegraphics[width=0.95\linewidth,left]{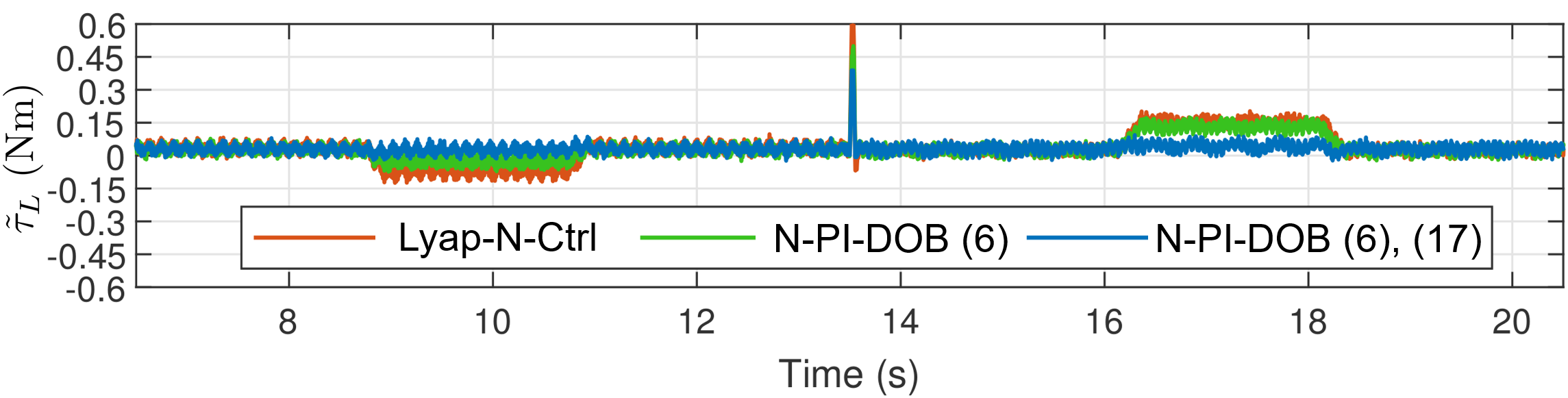}\label{fig:zeta2}
	}
	\subfigure[  {$e_\omega$} ]{
		\includegraphics[width=0.95\linewidth,left]{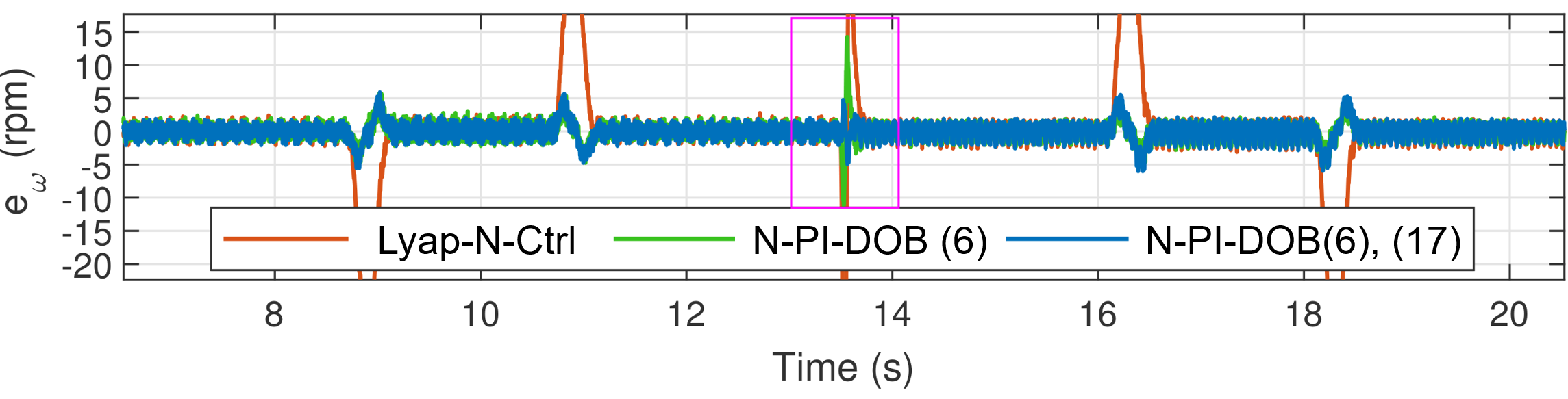} \label{fig:vel_comp_Exp}
	}
	\subfigure[  { $e_\omega$ Zoom-In  around 13.5 (s)} ]{
		\includegraphics[width=0.95\linewidth,left]{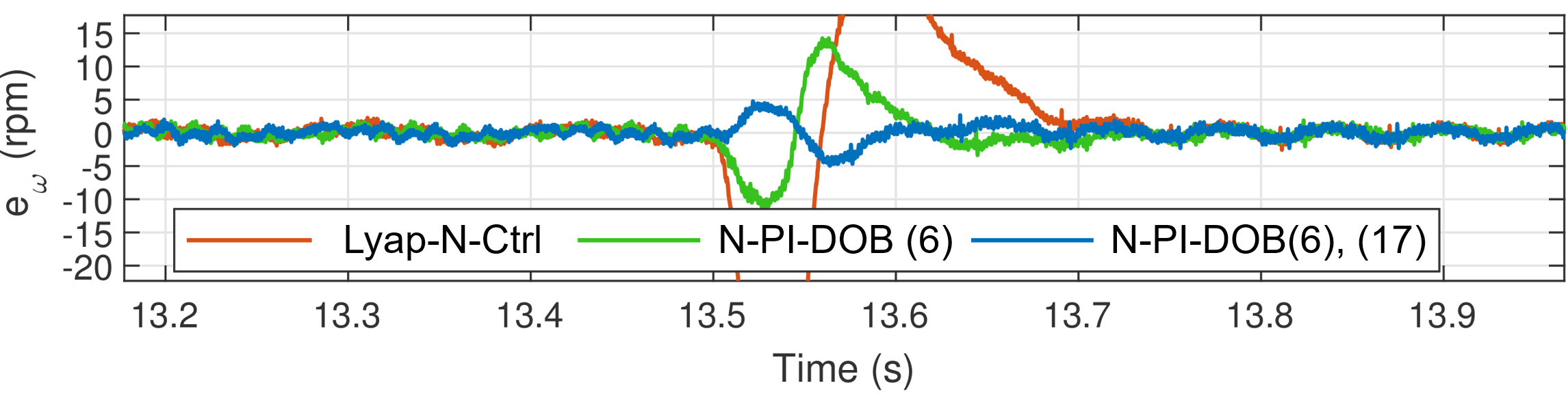} \label{fig:vel_comp_Exp_zoom}
	}
	\caption{ Comparative study of currents tracking, load torque estimation, and motion tracking Performance~($ {\mathbf{e}_{\alpha\beta}}_2, \tilde{\tau}_L, e_\omega$) of PMSMs: (a) currents tracking performances, $\norm{\mathbf{e}_{\alpha\beta}}_2$, (b) load torque estimation error, $\tilde{\tau}_L$, (c) velocity tracking error, $e_\omega$, and (d)  velocity tracking error, $e_\omega$ zoomed around 13.5(s) when the velocity reference was changed. }
	\label{fig:Cascade_Performance}
\end{figure}

	%
	Figure~\ref{fig:dab} shows the estimation results of inner-loop disturbances~($\hat{d}_\alpha, \hat{d}_\beta$) around 10.9 sec and 16.2 sec, respectively.
 Figure.~\ref{fig:iab} shows the currents tracking performances~($i_\alpha^d$ vs $i_\alpha$, $i_\beta^d$ vs $i_\beta$).
Notice that there is no way to measure the exact values of ${d}_\alpha, d_\beta$. However, we can confirm its performances indirectly by observing current tracking performances as illustrated in Fig.~\ref{fig:iab}.
   In Fig.~\ref{fig:iab}, the red/blue lines indicate the graph of $\alpha$/$\beta$-phase signals. We present experimental results  in two cases: Case 1) 500 rpm  speed region with the load torque was reset from -0.3 to 0 Nm  as shown in~Fig.~\ref{fig:dab_zoom1}, Case 2) 1000 rpm  speed region with the load torque was applied from 0 to 0.3 Nm as shown in~Fig.~\ref{fig:dab_zoom3}.
	Here, we observed that each of estimated inner-loop disturbances has $33.33$ Hz and $66.66$ Hz  frequency components, identically same as with the electric angular speed.
	As we expect that disturbances come from the parameter uncertainties and the phase delay of the low-pass filter, the amplitude of disturbances changed as the amplitude of  $i_\alpha^d, i_\beta^d, i_\alpha, i_\beta$ varied.
%
	
	%
	%
	Figure~\ref{fig:Cascade_Performance} shows the comparative studies for analyzing tracking/estimation performances~($\mathbf{e}_{\alpha\beta}, \mathbf{d}_{\tau},\mathbf{e}_{m}$) between the methods in~\cite{jeong2020nonlinear}  and N-PI-DOB with Lyapunov-based nonlinear currents controller.
	%
	Figure~\ref{fig:xi} shows the currents tracking performances.
	We expect that the inner-loop N-PI-DOB and Lyapunov-based nonlinear currents controller enhance the currents tracking performances. And we observed that the amplitude of~$\norm{\mathbf{e}_{\alpha\beta}}_2$ of the proposed method uniformly remains within the bounded ball although the velocity reference changed and load torque was injected. However, the other previous method and without N-PI-DOB for inner-loop control system  do not show the uniform currents tracking performances in same conditions.
	Figure~\ref{fig:zeta2} shows a comparative study of load torque estimation performance, $\tilde{\tau}_L$.
	%
	%
	%
	%
	Figure~\ref{fig:vel_comp_Exp} shows the velocity tracking performances.
	%
	%
	We expected that the inner-loop disturbance, which has fast dynamics, will degrade the tracking performances in the transient region. And the proposed nonlinear outer/inner-loop N-PI-DOB enhance the tracking performances in the transient region.
	We observed that the tracking performance of outer/inner-loop N-PI-DOB and outer-loop N-PI-DOB are similar in steady-state region. However, they are different in a transient region such as acceleration duration, 13.5$\sim$13.6  sec. We confirmed that the proposed method enhances the motion tracking performances in transient region with the inner-loop disturbance estimation/compensation logics.
	The experimental results validates the effectiveness of the proposed nonlinear PMSM controller.
	%
	
	%


	

	\section{Conclusion}

	This paper presented the precision motion tracking control with a new nonlinear proportional-integral disturbance observer (N-PI-DOB) for a surface-mounted permanent magnet synchronous motor (SPMSM). Firstly, we introduced the torque modulation technique and presented the N-PI-DOB for load torque estimation to control the desired motion of SPMSM. Then, we showed that the motion tracking error dynamics of the SPMSM can be represented in the form of a 3-cascade system, including a mechanical motion system, load torque estimation system, and currents tracking system. With this 3-cascade representation, we analyzed that the currents tracking errors disturb the convergence of the motion tracking errors. Then, we presented a new nonlinear disturbance observer of inverter dynamics and show the global exponential stability. 	 We performed the experimental comparative studies of SPMSM controller and show that velocity tracking performances in rapidly speed varying region have been improved with the proposed outer/inner-loop N-PI-DOB and motion tracking controller. 


	\begin{IEEEbiography}[{\includegraphics[width=1in,height=1.25in,clip,keepaspectratio]{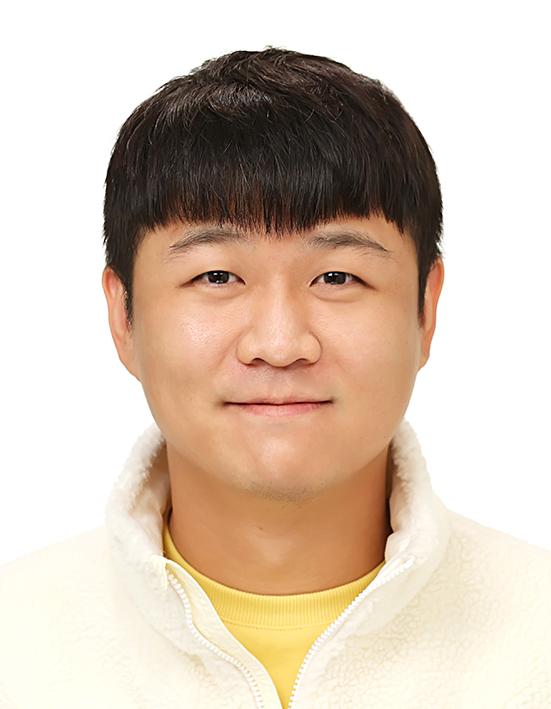}}]
		{Yong Woo Jeong} received a B.S. degree in electrical engineering from Dong-A University, Busan, South Korea, in 2016.
		He is a Ph. D. candidate with the Systems and Control Laboratory at Hanyang University, Seoul, South Korea.
		His current research interests include nonlinear control, estimators, and their applications in autonomous driving systems and power electronic systems.
		He is a member of the IEEE Control System Society, Industrial Electronics Society, Intelligent Transportation Systems Society, the Korean Society of Automotive Engineers, and the Institute of Control, Robotics, and Systems.
	\end{IEEEbiography}
	\begin{IEEEbiography}[{\includegraphics[width=1in,height=1.25in,clip,keepaspectratio]{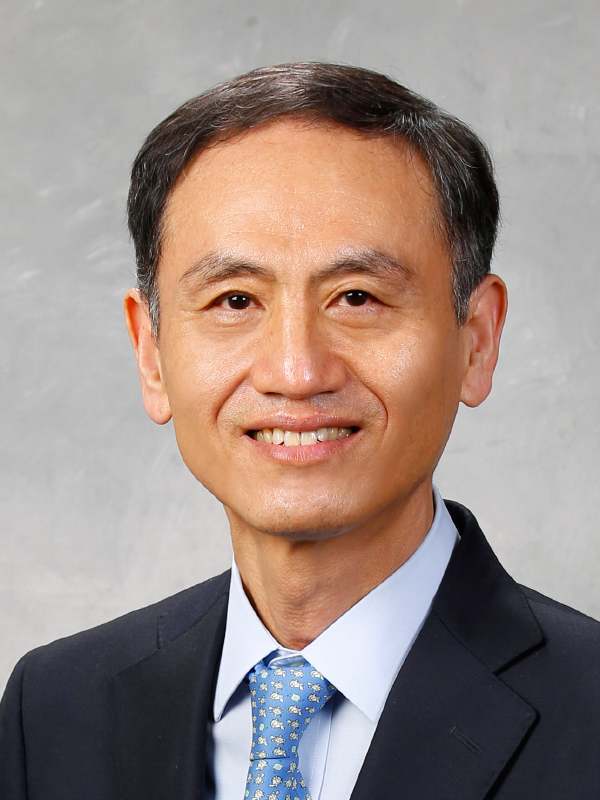}}]
		{Chung Choo Chung} (S'91-M'93) received his B.S. and M.S. degrees in electrical engineering from Seoul National University, Seoul, South Korea, and his Ph.D. degree in electrical and computer engineering from the University of Southern California, Los Angeles, CA, USA, in 1993. From 1994 to 1997, he was with the Samsung Advanced Institute of Technology, Korea. In 1997, he joined the Faculty of Hanyang University, Seoul, South Korea. Before joining Hanyang Univeristy, he was appointed as one of  21 century leaders by Samsung Group and finished Samsung Advanced Manager Program given by Wharton Business School at the University of Pennsylvania in 1996.
		%
		%
		He was a general chair of ICCAS 2019, Jeju ICC, Korea and a general chair of IEEE CDC 2020, Jeju ICC, Korea.   He  was the 2019 President of the Institute of Control, Robotics and Systems (ICROS), Korea. He is a member of the National Academy
		of Engineering of Korea (NAEK). 		
	\end{IEEEbiography}
\end{document}